\newtheorem{theorem}{Theorem}
\newtheorem{lemma}[theorem]{Lemma}
\newcommand{\ZZ}{{\mathbb Z}}
\begin{document}
\newcommand{\Yuan}[1]{ { \color{red} (Yu-An: {#1}) }}
\newcommand{\Yijia}[1]{ { \color{blue} (Yijia: {#1}) }}

\begin{CJK*}{UTF8}{gbsn}

\title{Equivalence between fermion-to-qubit mappings in two spatial dimensions}

\author{Yu-An Chen \CJKfamily{bsmi}(陳昱安)}
\email[E-mail: ]{yuanchen@umd.edu}
\affiliation{Department of Physics, Joint Quantum Institute, and Joint Center for Quantum Information and Computer Science, NIST/University of Maryland, College Park, Maryland 20742, USA}
\affiliation{Condensed Matter Theory Center,University of Maryland, College Park, Maryland 20742, USA}

\author{Yijia Xu (许逸葭) }
\email[E-mail: ]{yijia@umd.edu}
\date{\today}
\affiliation{Department of Physics, Joint Quantum Institute, and Joint Center for Quantum Information and Computer Science, NIST/University of Maryland, College Park, Maryland 20742, USA}
\affiliation{Institute for Physical Science and Technology, University of Maryland, College Park, Maryland 20742, USA}

\begin{abstract}
We argue that all locality-preserving mappings between fermionic observables and Pauli matrices on a two-dimensional lattice can be generated from the exact bosonization in Ref.~\cite{CKR18}, whose gauge constraints project onto the subspace of the toric code with emergent fermions. Starting from the exact bosonization and applying Clifford finite-depth generalized local unitary (gLU) transformation, we can achieve all possible fermion-to-qubit mappings (up to the re-pairing of Majorana fermions).
In particular, we discover a new super-compact encoding using 1.25 qubits per fermion on the square lattice, which is lower than any method in the literature. We prove the existence of fermion-to-qubit mappings with qubit-fermion ratios $r=1+ \frac{1}{2k}$ for positive integers $k$, where the proof utilizes the trivialness of quantum cellular automata (QCA) in two spatial dimensions.
When the ratio approaches 1, the fermion-to-qubit mapping reduces to the 1d Jordan-Wigner transformation along a certain path in the two-dimensional lattice.
Finally, we explicitly demonstrate that the Bravyi-Kitaev superfast simulation, the Verstraete-Cirac auxiliary method, Kitaev’s exactly solved model, the Majorana loop stabilizer codes, and the compact fermion-to-qubit mapping can all be obtained from the exact bosonization.
\end{abstract}

\maketitle
\end{CJK*}

\tableofcontents

\section{Introduction}
A fermion-to-qubit mapping is a duality between local even\footnote{We only consider terms respecting the fermion-parity symmetry, i.e., products with even numbers of fermionic creation and annihilation operators.} fermionic operators and local products of Pauli matrices. 
It is well known that any fermionic system in a 1d lattice can be mapped onto a 1d spin system by the Jordan-Wigner transformation. The Jordan-Wigner transformation can also be applied to systems in higher dimensions by choosing a particular ordering of fermions; however, the mapping becomes highly non-local. From both theoretical and practical points of view, mapping local fermionic operators to local spin operators in higher dimensions is an essential topic.
In the last two decades, there have been many proposals of fermion-to-qubit mappings for two dimensions \cite{verstraete_cirac, kitaev_honeycomb, Whitfield_local_spin_2016, CKR18, majorana_loop, Whitfield_fenwick_2019, Ruba_bosonization_2020, compact_fermion, po_symmetric_2021} and three or arbitrary dimensions \cite{bravyi_kitaev, CK19, C20}. These fermion-to-qubit mappings play important roles in various topics of modern physics, such as exactly solvable models for topological phases \cite{kitaev_honeycomb, EF19, CET21, CH21}, fermionic quantum simulations \cite{bravyi_kitaev, verstraete_cirac, Whitfield_local_spin_2016, majorana_loop, compact_fermion}, and quantum error correction \cite{bravyi2010majorana,vijay2016physical,Fu_Majorana_code1,Fu_Majorana_code2,QC_Majorana_codes,hastings2017small}.
In particular, the exact bosonizations in Refs.~\cite{CKR18, CK19, C20, Ruba_bosonization_2020} construct the toric code with fermions in arbitrary dimensions and impose gauge constraints to restrict in the subspace with emergent fermions, which provide an elegant spacetime description by the Chern-Simons and the Steenrod square topological action. The spacetime pictures for other fermion-to-qubit mappings mentioned above are not manifest. Ref.~\cite{Whitfield_local_spin_2016} points out that the Verstraete-Cirac auxiliary method \cite{verstraete_cirac} can be related to a topological model (toric code), and the compact encoding \cite{compact_fermion} found that its stabilizer is similar to a toric code.

From the theoretical perspective, it is tempting to ask a question: are all fermion-to-qubit mappings in two spatial dimensions ``equivalent'' to the exact bosonization? First, we define the ``equivalence'' by finite-depth generalized local unitary (gLU) transformations. Informally speaking, finite-depth gLU transformation is a finite-depth quantum circuit (FDQC) with ancilla qubits. 
We argue that the answer to the above question is ``yes'' and demonstrate it with examples.

From the practical point of view, fermion-to-qubit mappings are widely used in fermionic quantum simulations of physical systems. In practical quantum simulations, an important quantity is the qubit-fermion ratio $r$, the number of qubits to simulate one fermion on average, since it is directly related to the total number of fermionic modes encoded in a qubit array. Suppose we encode $n$ fermionic modes by $m$ qubits, then the qubit-fermion ratio is $\frac{m}{n}$. The best fermion-to-qubit mapping on the 2d square lattice is the compact fermion-to-qubit mapping with the ratio $r=1.5$ \cite{compact_fermion}.

In this work, we focus on lattices in two spatial dimensions. First, we construct a new super-compact fermion-to-qubit mapping with the qubit-fermion ratio $r=1.25$ on the 2d square lattice. Moreover, we provide a systematic approach to construct various 2d bosonizations by utilizing the ideas of Clifford circuit \cite{Gottesman_Clifford,gottesman1998heisenberg} and finite-depth generalized local unitary (gLU) transformations \cite{Xiechen_local_unitary,QI_meets_QM}. Such an approach provides a new perspective to study the relationship between different fermion-to-qubit mappings. We find that all the local fermion-to-qubit mappings can be generated from the exact bosonization by finite-depth gLU transformations. In particular, we explicitly show how to obtain the Bravyi-Kitaev superfast encoding (BKSF), the Verstraete-Cirac mapping, Kitaev's honeycomb model, the Majorana loop stabilizer codes (MLSC), and the compact fermion-to-qubit mapping.

\subsection*{Summary of results}
We first demonstrate a super-compact fermion-to-qubit mapping on the 2d square lattice with qubit-fermion ratio $r=1.25$ in Sec.~\ref{sec:super_compact} and compare its data with other fermion-to-qubit mappings in Table~\ref{tab: results compared with previous papers}.
In Sec.~\ref{sec:gLU_and_bosonization}, we define the crucial theoretical technique in our construction: the generalized local unitary (gLU) transformation \cite{Xiechen_local_unitary,QI_meets_QM}.
Then, in Sec.~\ref{sec:square_r_1.5}, we derive the $r=1.5$ fermion-to-qubit mapping, which is equivalent to the compact encoding \cite{compact_fermion}. In Sec.~\ref{sec:square_r_1.25}, we further improve the ratio to derive the $r=1.25$ construction shown in the previous section. 
In Sec.~\ref{sec:Clifford_construction}, we prove that a general construction with ratio $r=1 + \frac{1}{2k}$ exists for any positive integer $k$. The proof utilizes the trivialness of 2d quantum cellular automata (QCA) \cite{FH20,Haah_Clifford_QCA21}.
In Sec.~\ref{sec:equivalence_relation}, we define the equivalence relation between different 2d bosonizations based on finite-depth gLU transformations and discuss the equivalence between the exact bosonization and many well-known fermion-to-qubit mappings. We construct explicit Clifford circuits that convert the exact bosonization to the Bravyi-Kitaev superfast encoding (Sec.~\ref{sec:Bravyi_Kitaev}), the Verstraete-Cirac mapping (Sec.~\ref{sec:Verstraete_Cirac}), Kitaev's honeycomb model (Sec.~\ref{sec:Kitaev_honeycomb}), the Majorana loop stabilizer codes (Sec.~\ref{sec:Majorana_loop}), and the Jordan-Wigner transformation (Sec.~\ref{sec:Jordan_Wigner}).

\begin{table*}[htbp]
    \centering
    \begin{tabular}{ |c | c | c | c | c |}
    \hline
    {}& qubit-fermion-ratio $r$ &fermion parity weight& hopping weight & stabilizer weight  \\
    \hline
    Verstraete-Cirac \cite{verstraete_cirac} \footnote{The graph structure of the auxiliary Hamiltonian is Fig.~\ref{fig:VC_auxiliary_ordering}.}& 2& 1& 3-4 & 6\\
    \hline
    BKSF \cite{bravyi_kitaev} \footnote{The ordering of edges is shown in Fig.~\ref{fig:BKSF_labeling}.} &2 & 4 & 2-6 & 6 \\
    \hline
    Kitaev's honeycomb model \cite{kitaev_honeycomb} &2&2 & 2-5 &6 \\
    \hline
    Exact bosonization \cite{CKR18} &2 & 4 & 2-6 & 6\\
    \hline
    MLSC \cite{majorana_loop} &2&3 & 3-4 &4-10 \\
    \hline
    Compact fermion-to-qubit mapping \cite{compact_fermion} &1.5 & 1 &3 & 8\\
    \hline
    Super-compact fermion-to-qubit mapping & 1.25 & 1-2 & 2-6 & 12\\
    \hline
    \end{tabular}
    \caption{\label{tab: results compared with previous papers}
    Comparison between fermion-to-qubit mappings on the 2d square lattice.
    }
\end{table*}

\section{Super-compact fermion-to-qubit mapping}\label{sec:super_compact}

In this section, we introduce a super-compact encoding of fermions by qubits with the qubit-fermion ratio $r = 1.25$. We first introduce the Hilbert spaces for fermions and qubits and then describe the mapping between them.

On the 2d square lattice in Fig.~\ref{fig:square_deformed_lattice}, each vertex $v$ contains a fermionic mode with creation/annihilation operator $c_v^\dagger$, $c_v$ with the standard commutation relation $\{ c_v, c^\dagger_{v^\prime} \} = \delta_{v v'}$. It is easier to use the Majorana basis
\begin{eqs}
    \gamma_v=c_v+c_v^\dagger,\quad \gamma_v'=\frac{c_v-c_v^\dagger}{i}.
\end{eqs}
The local fermion parity operator at a vertex $v$ is
\begin{eqs}
    B_v \equiv (-1)^{c_v^\dagger c_v} = - i \gamma_v \gamma^\prime_v,
\end{eqs}
and the hopping operator on an edge $e$ is
\begin{eqs}
    A_e = i\gamma_{L(e)}\gamma_{R(e)},
\end{eqs}
where $L(e)$ and $R(e)$ are the left and right vertices of the edge $e$ defined in Fig.~\ref{fig:hopping_direction}.
The even algebra of fermions consists of local observables with a trivial fermion parity, i.e., local observables which commute with the total fermion parity $(-1)^F \equiv \prod_f (-1)^{c^\dagger_f c_f}$.\footnote{The even fermionic algebra can also be considered as the algebra of local observables containing an even number of Majorana operators.}
The generators for the even algebra of fermions are $A_e$ and $B_v$ on all edges and vertices \cite{CKR18}.

\begin{figure}[t]
    \centering
    \includegraphics[width=0.35\textwidth]{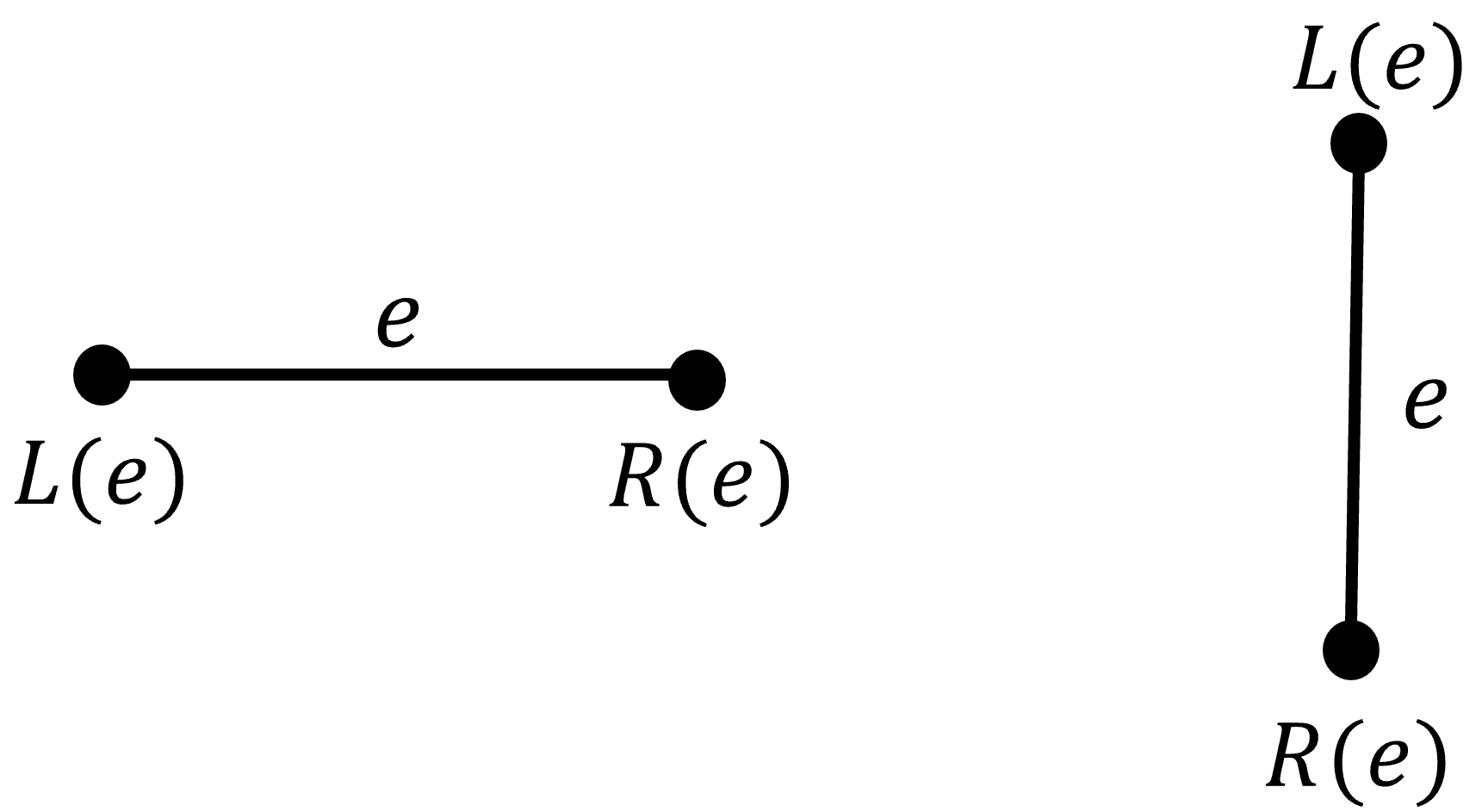}
    \caption{Definition of left and right along horizontal and vertical edges}
    \label{fig:hopping_direction}
\end{figure}

\begin{figure}[h]
    \centering
    \includegraphics[width=0.45\textwidth]{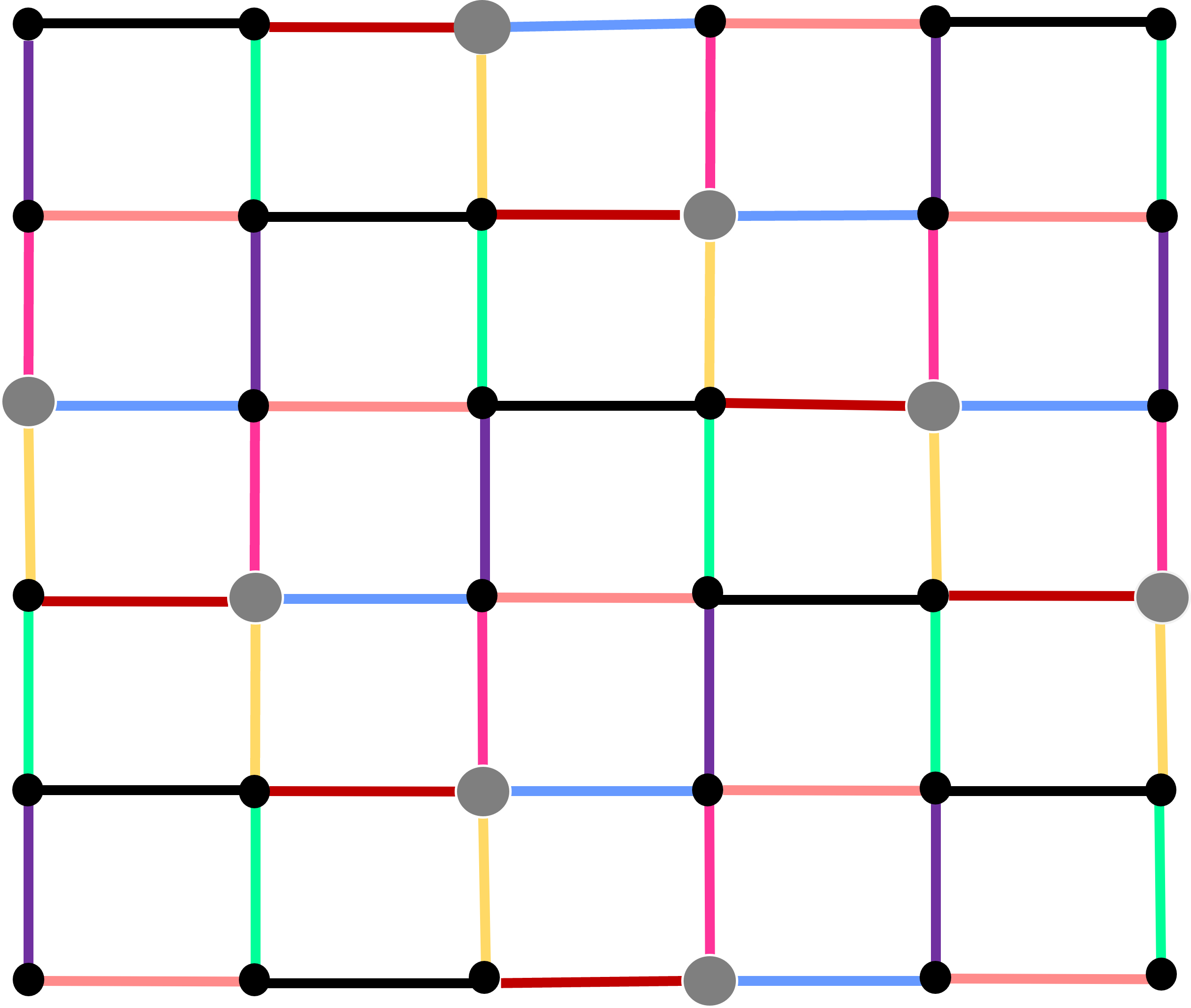}
    \caption{The Hilbert space for the super-compact encoding. Each vertex encodes a fermionic mode, while each black vertex has 1 qubit and each grey vertex has 2 qubits. The qubit-fermion ratio $r$ is $1.25$ in this setting.}
    \label{fig:square_deformed_lattice}
\end{figure}

On the other hand, the qubits are put at vertices in Fig.~\ref{fig:square_deformed_lattice}. We label vertices by black and grey colors. For each black vertex, there is 1 qubit, and for each grey vertex, there are 2 qubits. As shown in Fig.~\ref{fig:square_logical} and \ref{fig:square_stabilizer}, each grey vertex has two Pauli matrices on the top-right and bottom-left corners respectively.

The fermion-to-qubit mappings are mappings from $A_e, B_v$ to Pauli strings (products of Pauli matrices) on qubits with the same algebra. In addition, such mapping satisfies a condition that the product of $A_e$ along an arbitrary closed path should be the identity operator (up to a phase) since all Majorana operators cancel out. Such constraint requires the qubit system to be stabilized by a stabilizer group which is a set of hopping operators along with all closed loops.
Now, we explicitly construct the mapping on the lattice in Fig.~\ref{fig:square_deformed_lattice}:
\begin{eqs}
    A_e &= i\gamma_{L(e)}\gamma_{R(e)} \longleftrightarrow \tilde A_e, \\
    B_v &= - i \gamma_v \gamma^\prime_v \longleftrightarrow \tilde B_v,
\end{eqs}
where $\tilde A_e$ and $\tilde B_v$ are defined in Fig.~\ref{fig:square_logical}. It can be checked that two operators $\tilde A_e$ and $\tilde A_{e'}$ anti-commute if and only if $e$ and $e'$ are two distinct edges sharing one common vertex, and $\tilde A_e$ and $\tilde B_v$ anti-commute if and only if the edge $e$ contains the vertex $v$. Therefore, $\{ A_e, B_v \}$ and $\{ \tilde A_e, \tilde B_v \}$ satisfy the same commutation relations. The last step is to impose the stabilizer conditions (gauge constraints) that product of $\tilde A_e$ on a loop $l$ is proportional to the identity operator:
\begin{eqs}
    \prod_{e \in l} \tilde A_e = i^{|l|},
\label{eq: product of tilde Ae}
\end{eqs}
where $|l|$ is the length of the loop $l$.

The generators of this stabilizer condition are expressed in
Fig.~\ref{fig:square_stabilizer}. The stabilizers for the vertices that connect pink, black, green, purple edges generate the whole stabilizer group. The weight of such a stabilizer is $12$.

\begin{figure*}
    \centering
    \includegraphics[width=0.75\textwidth]{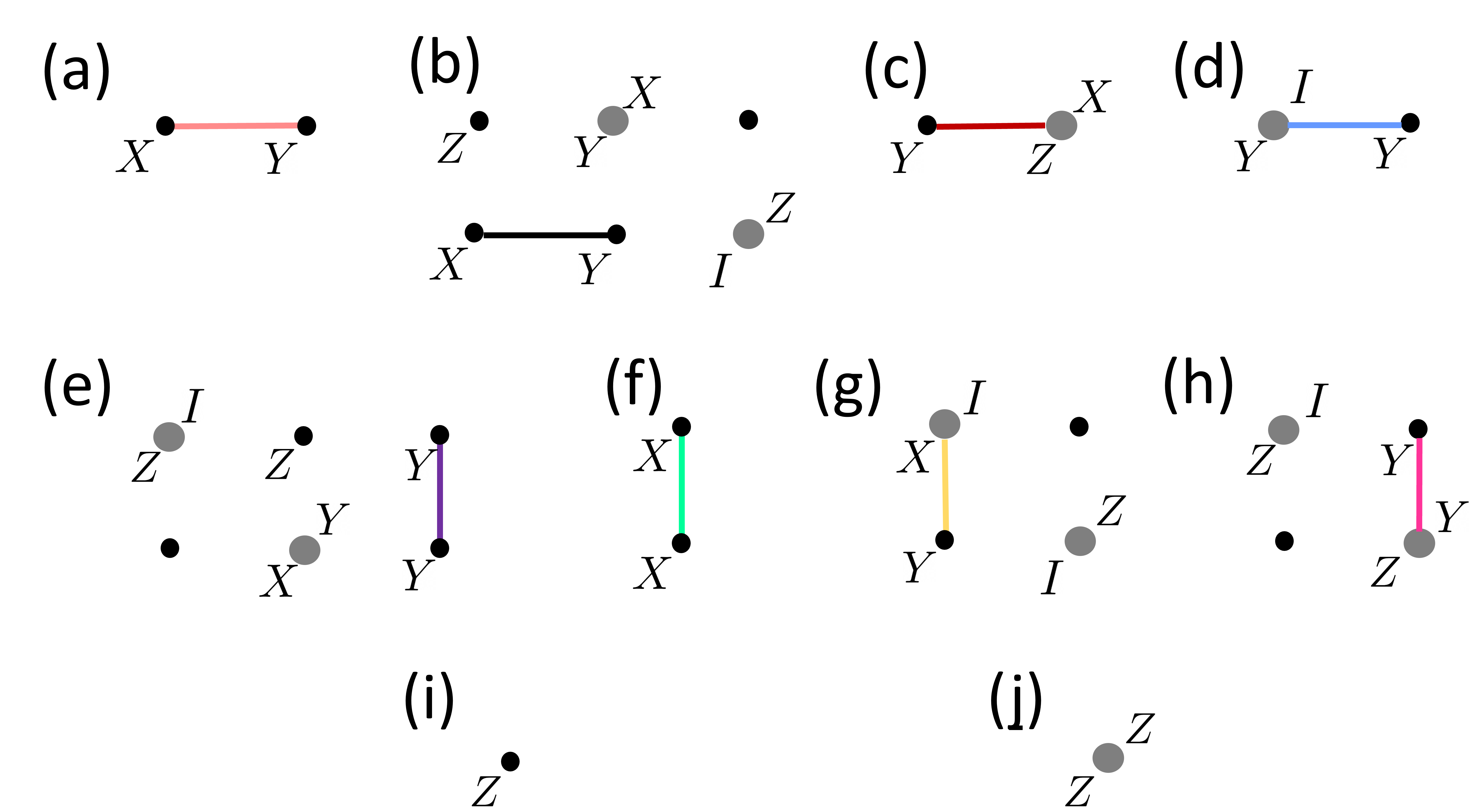}
    \caption{The hopping term $\tilde A_e$ and the parity term $\tilde B_v$ in the bosonic Hilbert space. The definition of $\tilde A_e$ and $\tilde B_v$ depend on the colors of edges and vertices. (a),(b),(c),(d) are four kinds of horizontal hopping terms; (e),(f),(g),(h) are four kinds of vertical hopping terms; (i), (j) are parity terms on black and grey vertices.}
    \label{fig:square_logical}
\end{figure*}

\begin{figure}
    \centering
    \includegraphics[width=0.4\textwidth]{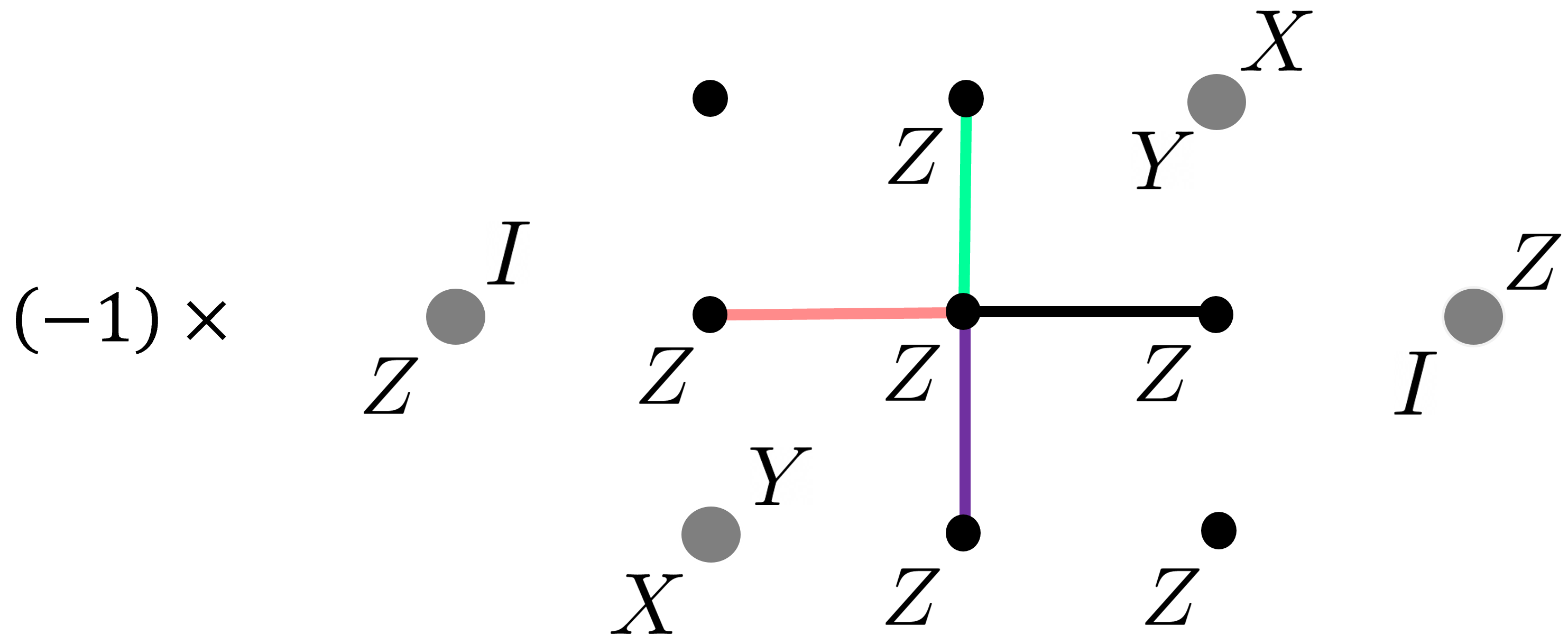}
    \caption{The stabilizer acts on the vertex that connects to pink, black, purple, and green edges. The product of $\tilde A_e$ on any closed loop is generated by this stabilizer.}
    \label{fig:square_stabilizer}
\end{figure}

\section{Generalized local unitary circuits on the exact bosonization}\label{sec:gLU_and_bosonization}

\begin{figure}[h]
    \centering
    \includegraphics[width=0.3\textwidth]{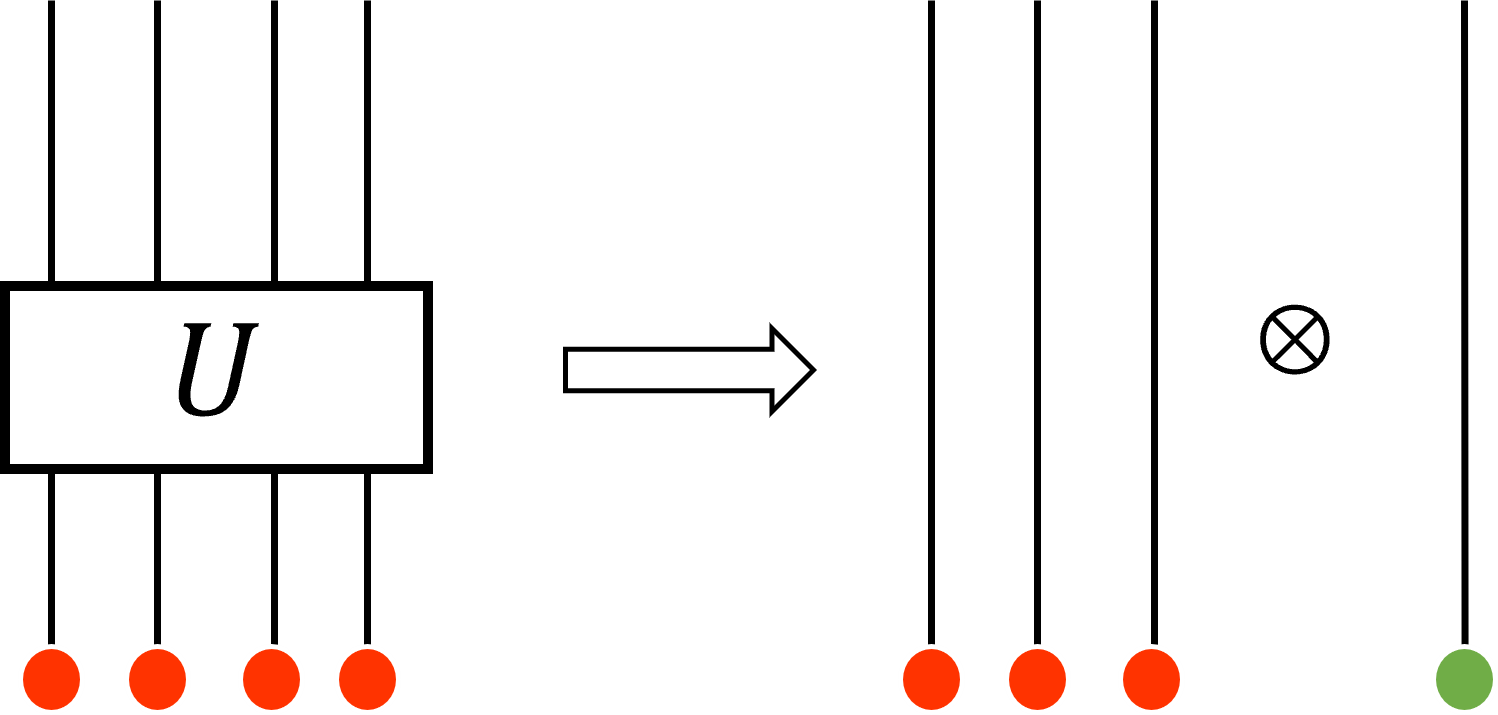}
    \caption{We disentangle the green qubit from others by a local unitary transformation $U$ and then discard this part. This is called a generalized local unitary circuit.}
    \label{}
\end{figure}

We describe a systematical way to derive fermion-to-qubit mappings from the exact bosonization in two spatial dimensions in the section. In Ref.~\cite{CKR18}, the exact bosonization is proposed using the subspace of the toric code with fermionic excitations, which will be reviewed in Sec.~\ref{sec: review of 2d bosonization}. By applying local unitary operators on the exact bosonization, we can generate a new fermion-to-qubit mapping. However, to include the lattice deformation or changing of the Hilbert spaces, local unitary operators are not sufficient, and the idea of generalized local unitary (gLU) operators is introduced.

The generalized local unitary (gLU) \cite{Xiechen_local_unitary,QI_meets_QM} arises from the idea of wave function renormalization where local unitary operators are used to add or remove degrees of freedom at different length scales. For a wave function, we can use an operation to add or remove ancilla qubits in the product state.\footnote{Since the ancilla qubits are in the products, adding or removing them do not change the information contained in the state.} We call the transformation that changes the degrees of freedom a generalized local unitary (gLU) operator.
The formal definition of gLU in Ref.~\cite{Xiechen_local_unitary} is as follows. For a quantum state $\ket{\Phi}$ with the reduced density matrix $\rho_A$ in region $A$, $\rho_A$ only acts in a support subspace $V_A^{sp}$ of the total Hilbert space $\mathscr{H}_A$. The dimension of $V_A^{sp}$ is $D_A^{sp}$, which is called support dimension. Hence, the total Hilbert space on region $A$ can be written as a direct sum $\mathscr{H}_A=V_A^{sp} \oplus \overline{V}_A^{sp}$. Let $\ket{\Tilde{\psi}_i}$, $i=1, ..., D_A^{sp}$ to be the basis of $V_A^{sp}$, $\ket{\Tilde{\psi}_i}$, $i=D_A^{sp}+1,...,D_A$ to be the basis of $\overline{V}_A^{sp}$, and
$\ket{\psi_i}$, $i=1,...,D_A$ to be the basis of $\mathscr{H}_A$ ($D_A= \mathrm{dim} (\mathscr{H}_A)$). We introduce the local unitary transformation $U^{full}$ to rotate $\ket{\psi_i}$ to $\ket{\Tilde{\psi}_i}$. In the new basis, wave function $\ket{\Phi}$ only has non-zero amplitudes on the first $D_A^{sp}$ basis vectors, and therefore we can truncate out the remaining columns of $U^{full}$ to get the gLU operator $U$ without losing any information.

With gLU transformation, we can remove the degrees of freedom in the system if they are in the product states. This operation is equivalent to disentangling parts of qubits from others. Hence, the qubit-fermion ratio $r$ can be improved by wisely applying finite-depth gLU to the exact bosonization. In this paper, we will use finite-depth gLU Clifford circuits since we focus on Pauli stabilizer models. We demonstrate the construction of fermion-to-qubit mappings with ratio $r=1.5$ in Sec.~\ref{sec:square_r_1.5} and $r=1.25$ in Sec.~\ref{sec:square_r_1.25} by conjugating the 2d exact bosonization by certain finite-depth gLU Clifford circuits.

\subsection{Review of the exact bosonization}\label{sec: review of 2d bosonization}

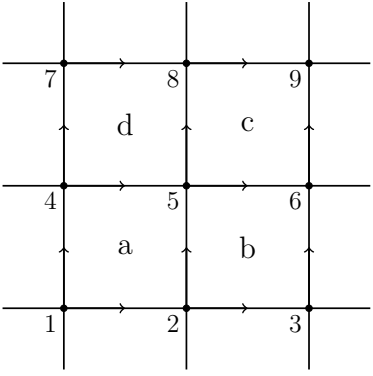
\begin{figure}[htb]
\centering
\resizebox{5cm}{!}{%
\begin{tikzpicture}
\draw[thick] (-3,0) -- (3,0);\draw[thick] (-3,-2) -- (3,-2);\draw[thick] (-3,2) -- (3,2);
\draw[thick] (0,-3) -- (0,3);\draw[thick] (-2,-3) -- (-2,3);\draw[thick] (2,-3) -- (2,3);
\draw[->] [thick](0,0) -- (1,0);\draw[->][thick] (0,2) -- (1,2);\draw[->][thick] (0,-2) -- (1,-2);
\draw[->][thick] (0,0) -- (0,1);\draw[->][thick] (2,0) -- (2,1);\draw[->][thick](-2,0) -- (-2,1);
\draw[->][thick] (-2,0) -- (-1,0);\draw[->][thick] (-2,2) -- (-1,2);\draw[->][thick](-2,-2) -- (-1,-2);
\draw[->][thick] (-2,-2) -- (-2,-1);\draw[->][thick] (0,-2) -- (0,-1);\draw[->] [thick](2,-2) -- (2,-1);
\filldraw [black] (-2,-2) circle (1.5pt) node[anchor=north east] {\large 1};
\filldraw [black] (0,-2) circle (1.5pt) node[anchor=north east] {\large 2};
\filldraw [black] (2,-2) circle (1.5pt) node[anchor=north east] {\large 3};
\filldraw [black] (-2,-0) circle (1.5pt) node[anchor=north east] {\large 4};
\filldraw [black] (0,0) circle (1.5pt) node[anchor=north east] {\large 5};
\filldraw [black] (2,0) circle (1.5pt) node[anchor=north east] {\large 6};
\filldraw [black] (-2,2) circle (1.5pt) node[anchor=north east] {\large 7};
\filldraw [black] (0,2) circle (1.5pt) node[anchor=north east] {\large 8};
\filldraw [black] (2,2) circle (1.5pt) node[anchor=north east] {\large 9};
\draw (-1,-1) node{\Large a};
\draw (1,-1) node{\Large b};
\draw (1,1) node{\Large c};
\draw (-1,1) node{\Large d};
\end{tikzpicture}
}
\caption{Bosonization on a square lattice \cite{CKR18}. We put Pauli matrices $X_e$, $Y_e$, $Z_e$ on each edge and one complex fermion $c_f, c_f^{\dagger}$ at each face. We work on the Majorana basis $\gamma_f = c_f + c_f^\dagger$ and $\gamma'_f = -i (c_f - c_f^\dagger)$ for convenience.}
\label{fig:square}
\end{figure}

We review the exact bosonization on the Hilbert space defined in Fig.~\ref{fig:square}.
The elements of vertices, edges, and faces are denoted $v,e,f$. On each face $f$ of the lattice we place a single pair of fermionic creation-annihilation operators $c_f,c_f^\dagger$, or equivalently a pair of Majorana fermions $\gamma_f,\gamma'_f$. The even fermionic algebra consists of local observables with a trivial fermion parity, i.e., local observables which commute with the total fermion parity $(-1)^F \equiv \prod_f (-1)^{c^\dagger_f c_f}$.
The even algebra is generated by \cite{CKR18}:
\begin{enumerate}
    \item On-site fermion parity:
    \begin{equation}\label{eq: fermion_parity}
    P_f \equiv -i\gamma_f\gamma'_f.
\end{equation}
    \item Fermionic hopping term:
    \begin{equation}\label{eq: fermion_hopping}
    S_e \equiv i\gamma_{L(e)}\gamma'_{R(e)},
    \end{equation}
    where $L(e)$ and $R(e)$ are faces to the left and right of $e$, with respect to the orientation of $e$ in Fig.~\ref{fig:square}.
\end{enumerate}

The bosonic dual of this system involves $\ZZ_2$-valued spins on the edges of the square lattice. For every edge $e$, we define a unitary operator $U_e$ that squares to $1$. Labeling the faces and vertices as in Fig. \ref{fig:square}, we define:
\begin{equation}
\begin{split}
U_{56} &= X_{56} Z_{25}, \\
U_{58} &= X_{58} Z_{45}, 
\end{split}
\label{eq: Ue definition}
\end{equation}
where $X_e$, $Z_e$ are Pauli matrices acting on a spin at each edge $e$.
Operators $U_e$ for other edges are defined by using translation symmetry. Pictorially, the operator $U_e$ is drawn as
\begin{equation}
    U_e=\begin{gathered}
    \xymatrix@=1cm{&{}\ar@{-}[d]|{\displaystyle X_e}\\
    {}\ar@{-}[r]|{\displaystyle Z}&}
    \end{gathered} \qquad \text{or} \qquad \begin{gathered}
    \xymatrix@=1cm{{}\ar@{-}[r]|{\displaystyle X_e}&\\
    {}\ar@{-}[u]|{\displaystyle Z}&}
    \end{gathered},
\label{eq: original Ue}
\end{equation}
corresponding to the vertical or horizontal edge $e$.

It has been shown in Ref.~\cite{CKR18} that $U_e$ and $S_e$ satisfy the same commutation relations. We also map the fermion parity $P_f$ at each face $f$ to the ``flux operator'' $W_f \equiv \prod_{e \subset f} Z_e$, the product of $Z_e$ around a face $f$:
\begin{equation}
    W_f=\begin{gathered}
   \xymatrix@=1cm{%
    {}\ar@{-}[r]|{\displaystyle Z}\ar@{}[dr]|{\mathlarger f} & {}\ar@{-}[d]|{\displaystyle Z} \\
    {}\ar@{-}[u]|{\displaystyle Z} & {}\ar@{-}[l]|{\displaystyle Z}}
    \end{gathered}.
\label{eq: original Wf}
\end{equation}
The bosonization map is
\begin{equation}
\begin{split}
    S_e  &\longleftrightarrow U_e,\\
    P_f  &\longleftrightarrow W_f,
\end{split}
\label{eq: 2d bosonization map}
\end{equation}
or pictorially 
\begin{align}
    i\times
    \begin{gathered}
    \xymatrix@=1.2cm{
    &\\
    {}\ar@{}[ur]|{\displaystyle \gamma_{L(e)}} \ar@{}[dr]|{\displaystyle \gamma'_{R(e)}} \ar@{-}[r]^{ \mathlarger e} &  \\
    &
    }
    \end{gathered}
    &\begin{gathered}\xymatrix{
    {}\ar@{<->}[r] &{}}\end{gathered}
    \hspace{0.3cm}\begin{gathered}
    \xymatrix@=1cm{%
    {}\ar@{-}[r]|{\displaystyle X_e} &{} \\{}\ar@{-}[u]|{\displaystyle Z}& {}}
    \end{gathered} \quad ,
    \label{eq: Ue on horizontal edge}
    \\[-15pt]
    i\times
    \begin{gathered}
    \xymatrix@=1.2cm{
    &{}\ar@{-}[d]^{\mathlarger e}&\\
    \ar@{}[ur]|{\displaystyle \gamma_{L(e)}}&\ar@{}[ur]|{\quad \displaystyle \gamma'_{R(e)}}&}
    \end{gathered}
    &\begin{gathered}\xymatrix{
    {}\ar@{<->}[r] &{}}\end{gathered}
    \begin{gathered}
    \xymatrix@=1cm{%
    &{}\ar@{-}[d]|{\displaystyle X_e} \\{}\ar@{-}[r]|{\displaystyle Z}&
    }
    \end{gathered} \quad ,
    \\[10pt]
    -i\gamma_f\gamma_f'
    &\begin{gathered}\xymatrix{
    {}\ar@{<->}[r] &{}}\end{gathered}
    \hspace{0.3cm}
    \begin{gathered}
    \xymatrix@=1cm{%
    {}\ar@{-}[r]|{\displaystyle Z}\ar@{}[dr]|{\mathlarger f} & {}\ar@{-}[d]|{\displaystyle Z} \\
    {}\ar@{-}[u]|{\displaystyle Z} & {}\ar@{-}[l]|{\displaystyle Z}}
    \end{gathered} \quad .
\end{align}
The condition $ P_a P_c S_{{58}} S_{{56}} S_{{25}} S_{{45}} = 1$ on fermionic operators gives gauge constraints (stabilizer) $G_v=W_{f_c} \prod_{e \supset v_5} X_e  =1$ for bosonic operators, or generally
\begin{equation}
    G_v =
    \begin{gathered}
   \xymatrix@=1cm{%
    &{}\ar@{-}[r]|{\displaystyle Z}\ar@{}[dr]|{\mathlarger f}& {}\ar@{-}[d]|{\displaystyle Z}\\
    {}\ar@{-}[r]|{\displaystyle X}&{v}\ar@{-}[u]|{\displaystyle XZ} & {}\ar@{-}[l]|{\displaystyle XZ}\\
    &{}\ar@{-}[u]|{\displaystyle X}&}
\end{gathered}
= 1.
\label{eq:gauge constraint at vertex}
\end{equation}
The gauge constraint Eq.~(\ref{eq:gauge constraint at vertex}) can be considered as the stabilizer ($G_v \ket{\Psi} = \ket{\Psi}$ for $\ket{\Psi}$ in the code space), which forms the stabilizer group $\mathcal{G}$. The operators $U_e$ and $W_f$ generate all logical operators.\footnote{The logical operators consist of all operators that commute with $\mathcal{G}$. $\mathcal{G}$ are trivial logical operators since stabilizers have no effect on the code space.
$U_e$ and $W_f$ generate all logical operators.} In the setting above, qubits live on edges and fermions live on faces, so the ratio between the number of qubits and the number of fermions is $r=2$. We are going to apply finite-depth gLU transformations to lower this ratio.

\subsection{Compact fermion-to-qubit mapping with ratio \texorpdfstring{$r=1.5$}{}} \label{sec:square_r_1.5}

In the exact bosonization on the square lattice, the bosonic subspace is constrained by the stabilizer Eq.~\eqref{eq:gauge constraint at vertex} at each vertex. First, we enlarge the unit cell to be a $2\times 2$ square
\begin{eqs}
    \begin{gathered}
        \xymatrix@=1cm{%
        &\ar@{-}[d]|{} \ar@{}[dr]|{\displaystyle \text{even}} \ar@{-}[r]|{} & \ar@{-}[r]|{} \ar@{-}[d]|{} \ar@{}[dr]|{\displaystyle \text{odd}}&\ar@{-}[d]|{} \\
        &\ar@{-}[r]|{}  \ar@{}[dr]|{\displaystyle \text{odd}} \ar@{-}[d]|{} &\ar@{-}[r]|{}  \ar@{}[dr]|{\displaystyle \text{even}} \ar@{-}[d]|{} &  \ar@{-}[d]|{} \\
        &\ar@{-}[r]|{}& \ar@{-}[r]|{} & 
        }
    \end{gathered} \quad.
\label{eq: 2x2 unit square}
\end{eqs}
Note that we have colored the faces to be even or odd as the checkerboard.
In each $2\times 2$ square, there are totally $4$ fermions, $8$ qubits and $4$ stabilizers, whose qubit-fermion ratio is $r = \frac{8}{4}=2$. We are going to apply a finite-depth gLU circuit to disentangle some qubits and reduce the ratio.

\begin{figure*}[t]
    \centering
    \includegraphics[width=0.8\textwidth]{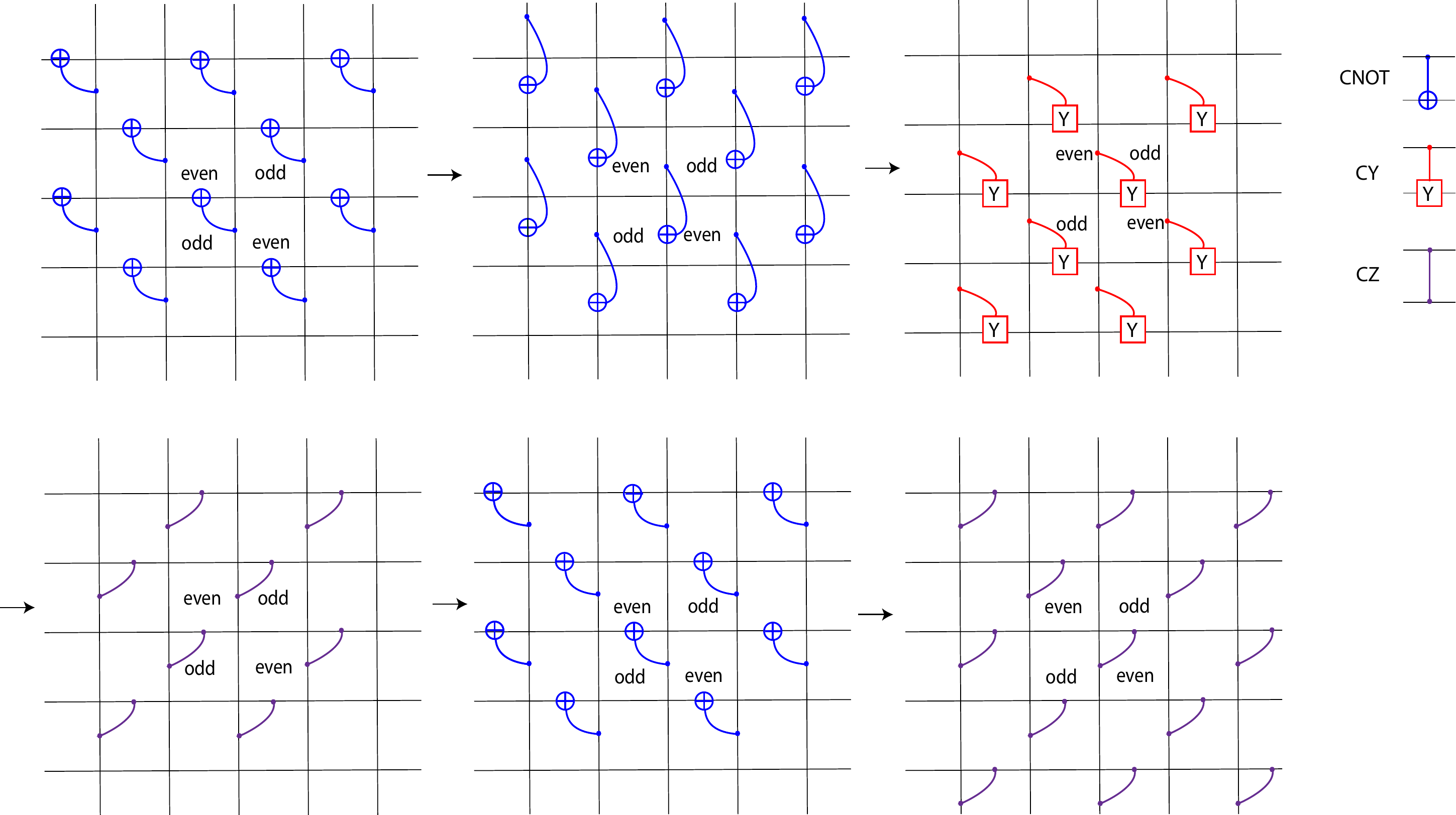}
    \caption{The finite-depth Clifford circuit for the $r=1.5$ construction}
    \label{fig:square_lattice}
\end{figure*}

In Fig.~\ref{fig:square_lattice}, the translational invariant Clifford circuit is defined.\footnote{We have enlarged the unit cell and therefore the distances for the translational generators are doubled. } We divide the stabilizers into two cases, living on an odd face or an even face, as shown below
\begin{eqs}
    G_{\text{odd}}=\begin{gathered}
   \xymatrix@=1cm{%
    &{}\ar@{-}[r]|{\displaystyle Z} & {}\ar@{-}[d]|{\displaystyle Z}\\
    {}\ar@{-}[r]|{\displaystyle X}&{}\ar@{-}[u]|{\displaystyle XZ} \ar@{}[ur]|{\displaystyle ~\text{odd}}  & {}\ar@{-}[l]|{\displaystyle XZ}\\
    &{}\ar@{-}[u]|{\displaystyle X}&}
    \end{gathered},
    \quad
    G_{\text{even}}=
    \begin{gathered}
        \xymatrix@=1cm{%
        &{}\ar@{-}[r]|{\displaystyle Z} & {}\ar@{-}[d]|{\displaystyle Z}\\
        {}\ar@{-}[r]|{\displaystyle X}&{}\ar@{-}[u]|{\displaystyle XZ} \ar@{}[ur]|{\displaystyle ~\text{even}}  & {}\ar@{-}[l]|{\displaystyle XZ}\\
        &{}\ar@{-}[u]|{\displaystyle X}&}
    \end{gathered}. \nonumber
\end{eqs}
After the conjugation of the Clifford circuit in Fig.~\ref{fig:square_lattice}, these stabilizers become
\begin{align}
    \begin{gathered}
    \xymatrix@=1cm{%
        &{}\ar@{-}[r]|{\displaystyle Z} & {}\ar@{-}[d]|{\displaystyle Z}\\
        {}\ar@{-}[r]|{\displaystyle X}&{}\ar@{-}[u]|{\displaystyle XZ} \ar@{}[ur]|{\displaystyle ~\text{odd}}  & {}\ar@{-}[l]|{\displaystyle XZ}\\
        &{}\ar@{-}[u]|{\displaystyle X}&}
    \end{gathered} \quad 
    &\begin{gathered}
        \xymatrix{
            {}\ar[r] &{}}
        \end{gathered} \quad
    (-1)\times \begin{gathered}
    \xymatrix@=1cm{%
        &{}\ar@{-}[r]|{} & {}\ar@{-}[d]|{}\\
        {}\ar@{-}[r]|{}&{}\ar@{-}[u]|{\displaystyle Y} \ar@{}[ur]|{\displaystyle \text{odd}}  & {}\ar@{-}[l]|{}\\
        &{}\ar@{-}[u]|{}&}
    \end{gathered}, \label{eq: r=1_5 stabilizer cancel}\\
    \begin{gathered}
        \xymatrix@=1cm{%
        &{}\ar@{-}[r]|{\displaystyle Z} & {}\ar@{-}[d]|{\displaystyle Z}\\
        {}\ar@{-}[r]|{\displaystyle X}&{}\ar@{-}[u]|{\displaystyle XZ} \ar@{}[ur]|{\displaystyle ~\text{even}}  & {}\ar@{-}[l]|{\displaystyle XZ}\\
        &{}\ar@{-}[u]|{\displaystyle X}&}
    \end{gathered}
    \quad
    &\begin{gathered}
        \xymatrix{
            {}\ar[r] &{}}
    \end{gathered}
    \quad
    (-1)\times \begin{gathered}
        \xymatrix@=1cm{%
        &{}\ar@{-}[r]|{} &{}\ar@{-}[d]|{} \\
        {}\ar@{-}[r]|{\displaystyle X} {}\ar@{-}[d]|{\displaystyle Z}&{}\ar@{-}[u]|{\displaystyle Y} \ar@{}[ur]|{\displaystyle \text{even}}  \ar@{-}[r]|{\displaystyle X}  & \\
        {}\ar@{-}[r]|{\displaystyle X}&{}\ar@{-}[u]|{} \ar@{-}[r]|{\displaystyle X} \ar@{-}[d]|{\displaystyle Y}&{}\ar@{-}[u]|{\displaystyle Z}\\
        &&}
    \end{gathered}.
\label{eq: r=1_5 stabilizer}
\end{align}
We have converted the stabilizer $G_{\text{odd}}$ into a single-qubit stabilizer $Y$.
This qubit will be in the eigenstate of $Y$ and can be discarded. Hence, we successfully eliminate the qubits on the left edges of all odd faces. In the $2 \times 2$ unit square Eq.~\eqref{eq: 2x2 unit square}, there are only 6 qubits remaining, and the ratio between qubits and fermions is $\frac{6}{4}=1.5$.

By the Clifford circuits in Fig.~\ref{fig:square_lattice}, we eliminate stabilizers on odd faces and convert the stabilizers on even faces to toric-code-like stabilizers. Next, we analyze the logical operators which represent the hopping of fermion after the conjugation. Here the convention of the fermionic hopping is $S_e\equiv i\gamma_{L(e)}\gamma_{R(e)}'$.
There are four types of fermionic hopping operator (after removing the degrees of freedom in Eq.~\eqref{eq: r=1_5 stabilizer cancel})
\begin{eqs}\label{eq:r=1.5_hopping}
    \begin{gathered}
   \xymatrix@=1cm{%
    {}\ar@{-}[r]|{}&  \\
    {}\ar@{-}[r]|{\displaystyle Z} \ar@{-}[u]|{}&{}\ar@{-}[u]|{\displaystyle X_e} \ar@{}[lu]|{\displaystyle \text{odd}}  }
    \end{gathered}
     &\begin{gathered}
    \xymatrix{
        {}\ar[r] &{}}
    \end{gathered} \quad
    \begin{gathered}
   \xymatrix@=1cm{%
    {}\ar@{-}[r]|{}& {}\ar@{-}[r]|{\displaystyle Z} & \\
    {}\ar@{-}[r]|{\displaystyle Z} \ar@{-}[u]|{}&{}\ar@{-}[u]|{\displaystyle X_e} \ar@{}[lu]|{\displaystyle \text{odd}}  &}
    \end{gathered},\\[10pt]
    \begin{gathered}
   \xymatrix@=1cm{%
    {}\ar@{-}[r]|{}&  \\
    {}\ar@{-}[r]|{\displaystyle Z} \ar@{-}[u]|{}&{}\ar@{-}[u]|{\displaystyle X_e} \ar@{}[lu]|{\displaystyle \text{even}}  }
    \end{gathered}
     &\begin{gathered}
    \xymatrix{
        {}\ar[r] &{}}
    \end{gathered} \quad
     \begin{gathered}
   \xymatrix@=1cm{%
    {}\ar@{-}[r]|{}& {}\ar@{-}[r]|{\displaystyle Z} &{}\ar@{-}[d]|{\displaystyle Z} \\
    {}\ar@{-}[r]|{\displaystyle Y} \ar@{-}[u]|{}&{}\ar@{-}[u]|{\displaystyle e} \ar@{}[lu]|{\displaystyle \text{even}}  \ar@{-}[r]|{\displaystyle X} \ar@{-}[d]|{\displaystyle Y} &\\&&}
    \end{gathered},\\[10pt]
    \begin{gathered}
   \xymatrix@=1cm{%
    {}\ar@{-}[r]|{\displaystyle X_e}&  \\
    {}\ar@{-}[u]|{\displaystyle Z} \ar@{-}[r]|{}&{}\ar@{-}[u]|{} \ar@{}[lu]|{\displaystyle \text{odd}}}
    \end{gathered}
     &\begin{gathered}
    \xymatrix{
        {}\ar[r] &{}}
    \end{gathered} \quad
    \begin{gathered}
   \xymatrix@=1cm{%
    {}\ar@{-}[r]|{\displaystyle X_e}&  \\
    {}\ar@{-}[u]|{} \ar@{-}[r]|{}&{}\ar@{-}[u]|{} \ar@{}[lu]|{\displaystyle \text{odd}}}
    \end{gathered},\\[10pt]
    \begin{gathered}
   \xymatrix@=1cm{%
    {}\ar@{-}[r]|{\displaystyle X_e}&  \\
    {}\ar@{-}[u]|{\displaystyle Z} \ar@{-}[r]|{}&{}\ar@{-}[u]|{} \ar@{}[lu]|{\displaystyle \text{even}}}
    \end{gathered}
     &\begin{gathered}
    \xymatrix{
        {}\ar[r] &{}}
    \end{gathered} \quad
    \begin{gathered}
   \xymatrix@=1cm{%
    {}\ar@{-}[r]|{\displaystyle X_e}&  \\
    {}\ar@{-}[u]|{} \ar@{-}[r]|{}&{}\ar@{-}[u]|{} \ar@{}[lu]|{\displaystyle \text{even}}}
    \end{gathered},
\end{eqs}
and two types of flux operators
\begin{eqs}\label{eq:r=1.5_flux}
    \begin{gathered}
   \xymatrix@=1cm{%
    {}\ar@{-}[r]|{\displaystyle Z}&  \\
    {}\ar@{-}[u]|{\displaystyle Z} \ar@{-}[r]|{\displaystyle Z}&{}\ar@{-}[u]|{\displaystyle Z} \ar@{}[lu]|{\displaystyle \text{odd}}}
    \end{gathered}
     &\begin{gathered}
    \xymatrix{
        {}\ar[r] &{}}
    \end{gathered} \quad
    \begin{gathered}
   \xymatrix@=1cm{%
    {}\ar@{-}[r]|{\displaystyle Z}&  \\
    {}\ar@{-}[u]|{} \ar@{-}[r]|{\displaystyle Z}&{}\ar@{-}[u]|{\displaystyle Z} \ar@{}[lu]|{\displaystyle \text{odd}}}
    \end{gathered},\\[10pt]
    \begin{gathered}
   \xymatrix@=1cm{%
    {}\ar@{-}[r]|{\displaystyle Z}&  \\
    {}\ar@{-}[u]|{\displaystyle Z} \ar@{-}[r]|{\displaystyle Z}&{}\ar@{-}[u]|{\displaystyle Z} \ar@{}[lu]|{\displaystyle \text{even}}}
    \end{gathered}
     &\begin{gathered}
    \xymatrix{
        {}\ar[r] &{}}
    \end{gathered} \quad
    \begin{gathered}
   \xymatrix@=1cm{%
    {}\ar@{-}[r]|{\displaystyle Z}&  \\
    {}\ar@{-}[u]|{\displaystyle Z} \ar@{-}[r]|{\displaystyle Z}&{}\ar@{-}[u]|{} \ar@{}[lu]|{\displaystyle \text{even}}}
    \end{gathered}.
\end{eqs}
We note that the stabilizer in Eq.~\eqref{eq: r=1_5 stabilizer} is the same as the stabilizer of the compact encoding in Ref.~\cite{compact_fermion} (up to the relabeling of Pauli matrices $X$, $Y$, $Z$). Since the stabilizers are the same, the space of logical operators must be equivalent. We can redefine the bottom two lines of Eq.~\ref{eq:r=1.5_hopping} as ``fermion parity'' by re-pairing of Majorana fermions as Fig.~\ref{fig:compact_regrouping}, and reproduce the compact encoding in Ref.~\cite{compact_fermion}.

\begin{figure}[t]
    \centering
    \includegraphics[width=0.45\textwidth]{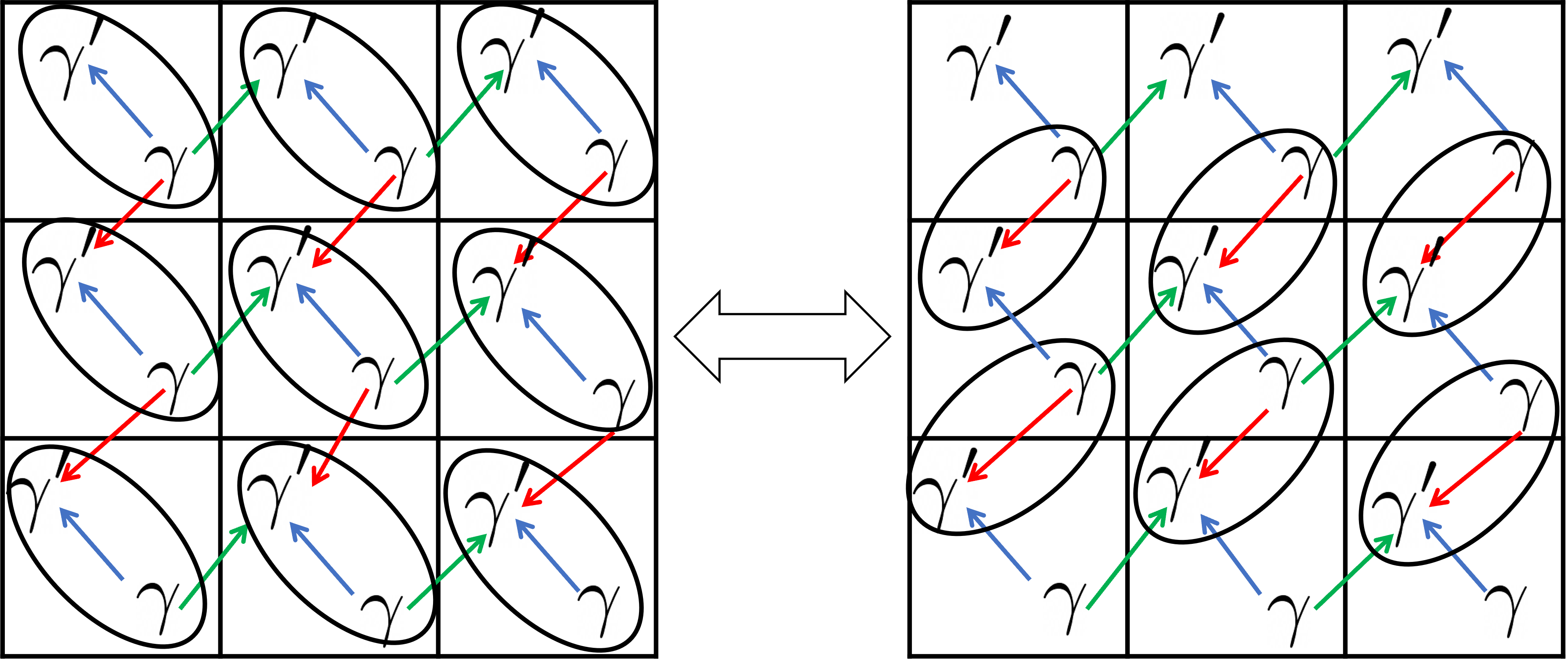}
    \caption{The $r=1.5$ construction is the same as the compact fermion-to-qubit mapping \cite{compact_fermion} after the re-paring of Majorana fermions above. Each circle represents a complex fermion generated by the two Majorana fermions. The underlying arrows specify the order to form a fermion. This is a Kasteleyn orientation ensuring the re-pairing is well-defined \cite{TF16}.}
    \label{fig:compact_regrouping}
\end{figure}

\subsection{Super-compact fermion-to-qubit mapping with ratio \texorpdfstring{$r=1.25$}{}} \label{sec:square_r_1.25}

\begin{figure}[h]
    \centering
    \includegraphics[width=0.3\textwidth]{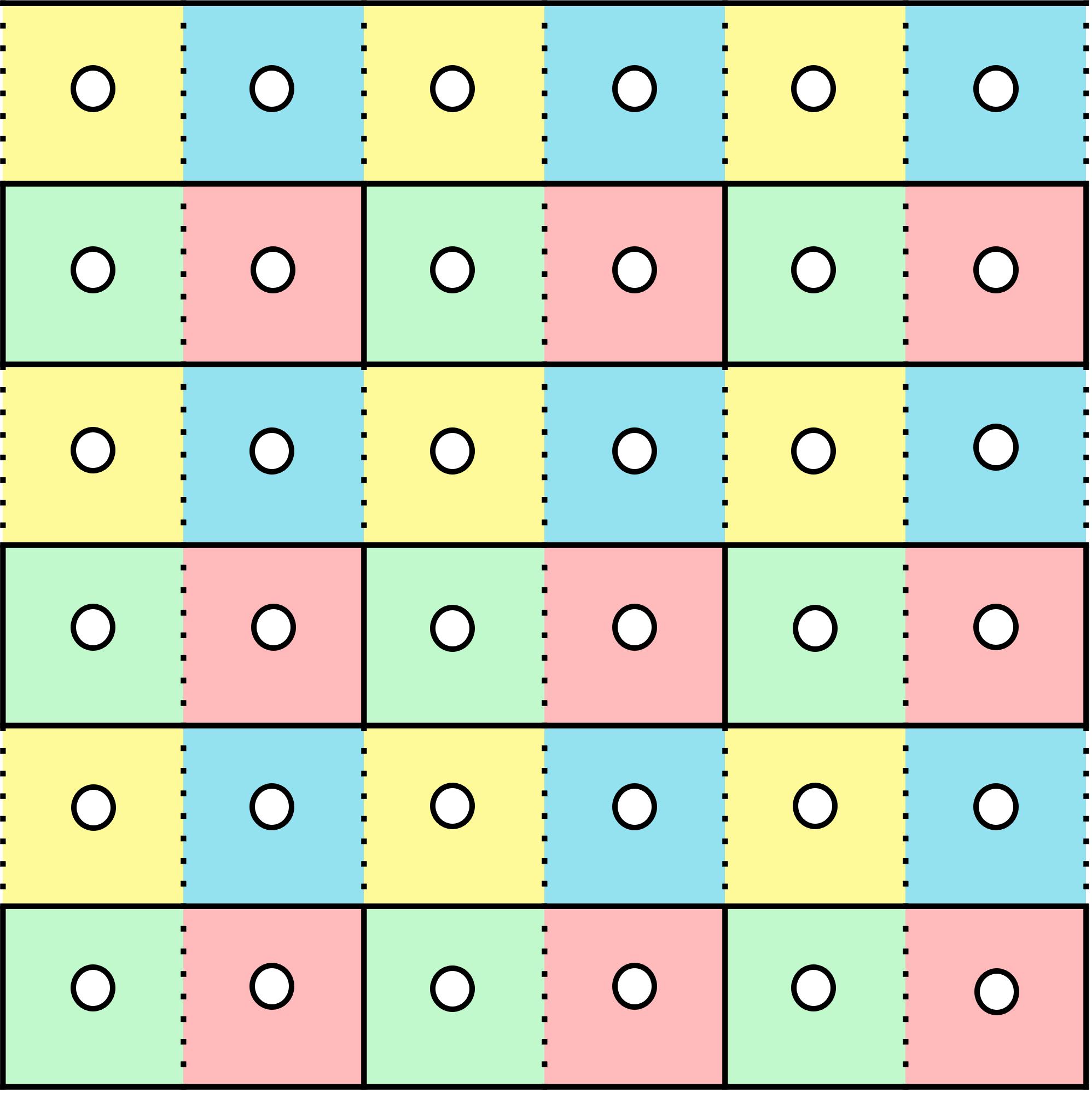}
    \caption{Square lattice, yellow squares are in class 1; blue squares are in class 2; red squares are in class 3; green squares are in class 4. Dots are fermionic modes encoded inside squares. Each solid line has one qubit. There are no qubits on dashed lines.}
    \label{}
\end{figure}

Based on the $r=1.5$ construction in the previous section, which is obtained from conjugating the original 2d bosonization by the Clifford circuit shown in Fig.~\ref{fig:square_lattice}, we further conjugate it by the Clifford circuit in Fig.~\ref{fig:square_lattice_step3}.  In the $r=1.5$ construction, we label squares by ``even" and ``odd". Since the translational invariant Clifford circuit in Fig.~\ref{fig:square_lattice_step3} acts on a $2\times 2$ cell, we color squares by 4 different colors: yellow, blue, red, and green and call them class 1, 2, 3, and 4 squares. Classes 1 and 3 belong to ``odd" faces, while classes 2 and 4 belong to the ``even" faces. 

\begin{figure*}[h]
    \centering
    \includegraphics[width=0.8\textwidth]{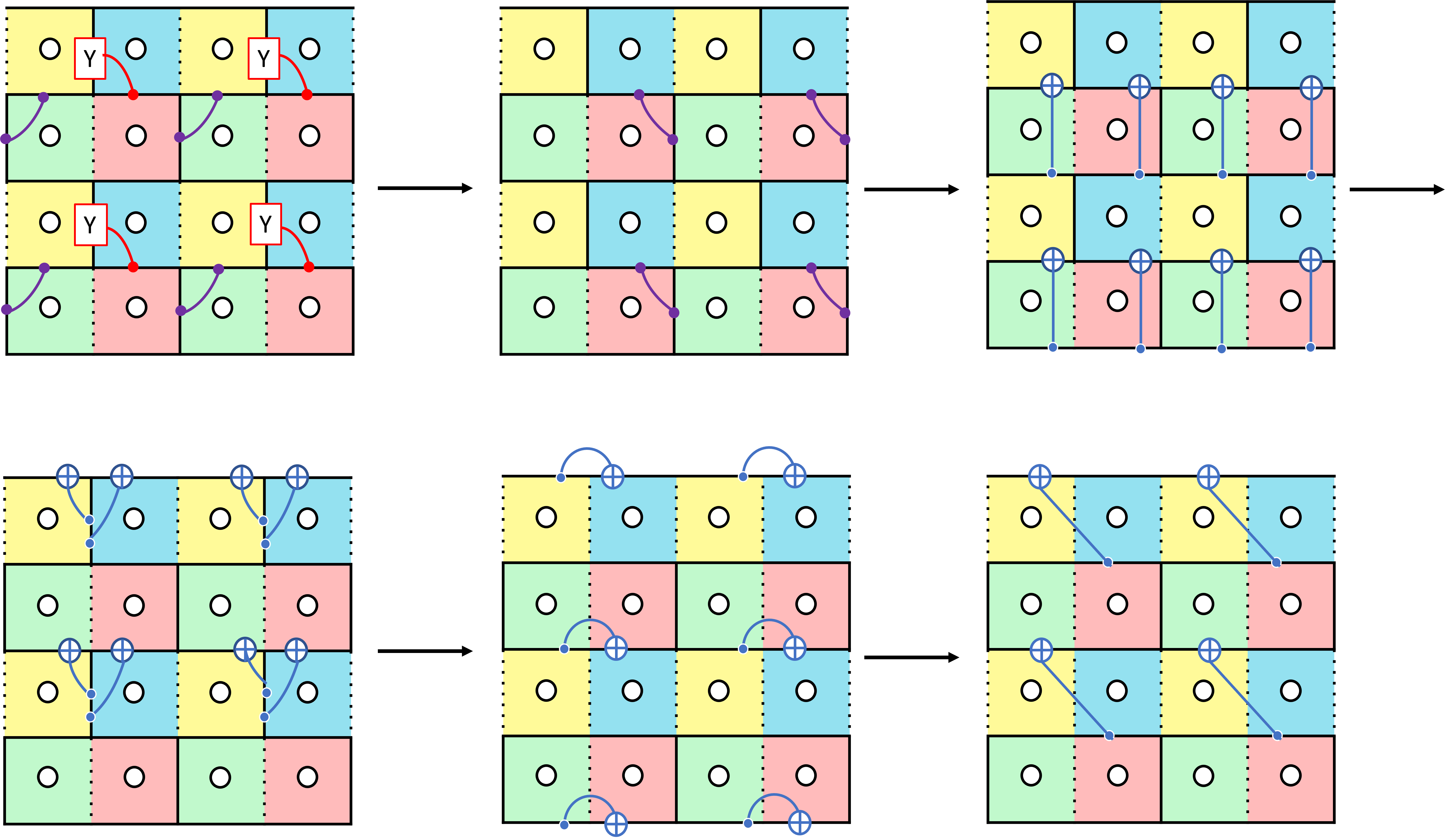}
    \caption{The finite-depth Clifford circuit to construct bosonization with $r=1.25$.}
    \label{fig:square_lattice_step3}
\end{figure*} 

\begin{figure*}[h]
\centering
    \includegraphics[width=\textwidth]{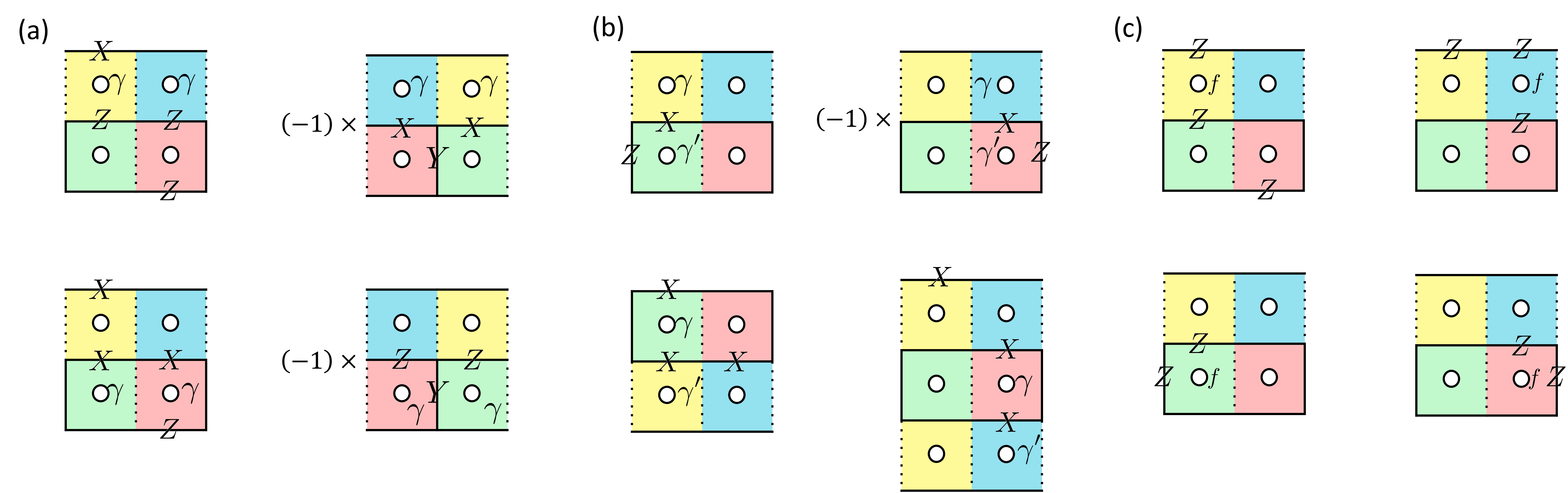}
    \caption{(a) represents the nearest-neighbor horizontal hopping terms; (b) represents the near-neighbor vertical hopping terms; (c) represents the fermion parity operators}
    \label{fig:r1.25_logical}
\end{figure*}

\begin{figure*}[h]
    \centering
    \includegraphics[width=0.5\textwidth]{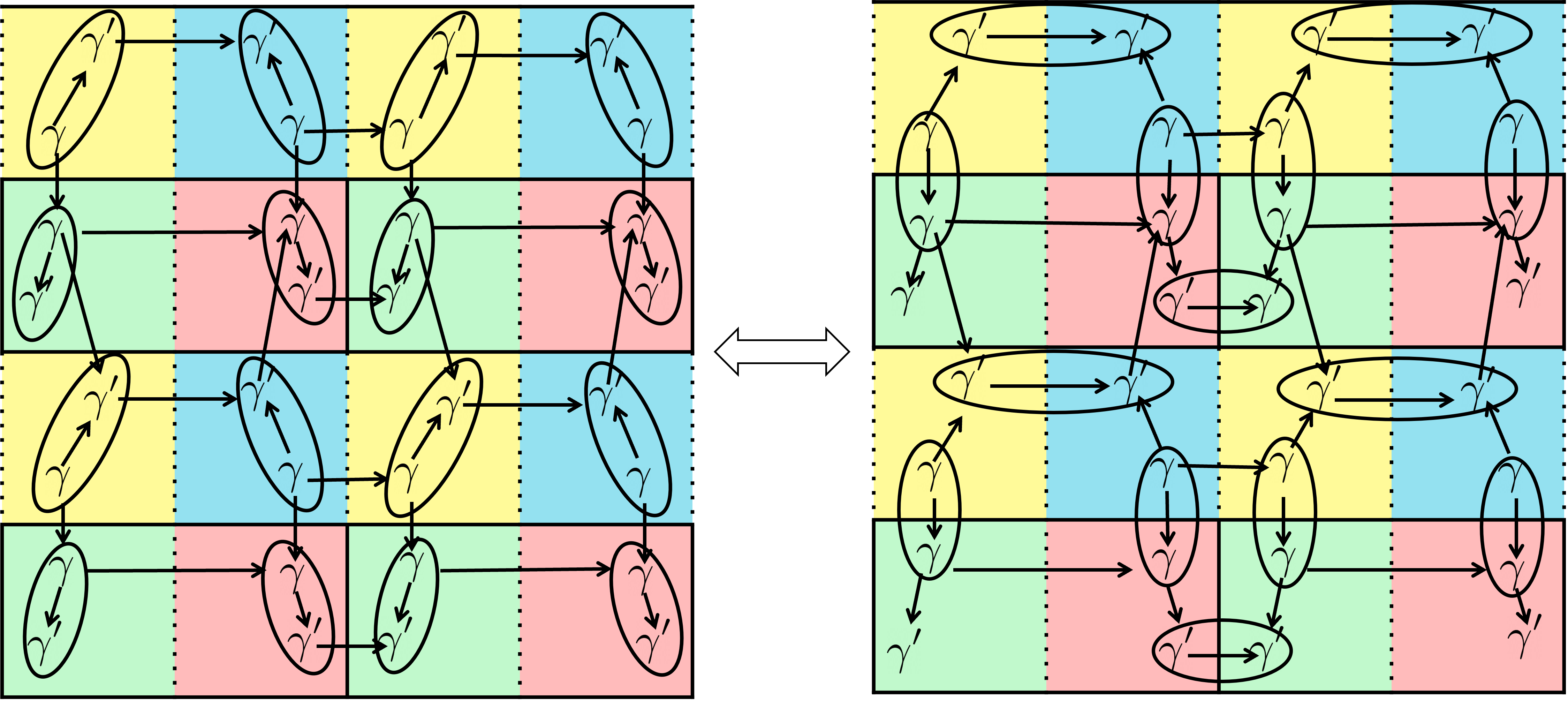}
    \caption{The super-compact fermion-to-qubit mapping is the $r=1.25$ construction after the fermion re-pairing. The arrows specify the order to form a complex fermion by two Majorana fermions in a circle. The arrows form a Kasteleyn orientation \cite{Kasteleyn_orientation_spin_structure}.}
    \label{fig:super_compact_regroup}
\end{figure*}

Conjugated by the circuits in Fig.~\ref{fig:square_lattice_step3}, the stabilizers enveloping blue and yellow squares become

\begin{equation*}\label{eq:r_1.25_stb_step1}
    (-1)\times \begin{gathered}
   \includegraphics[width=0.14\textwidth]{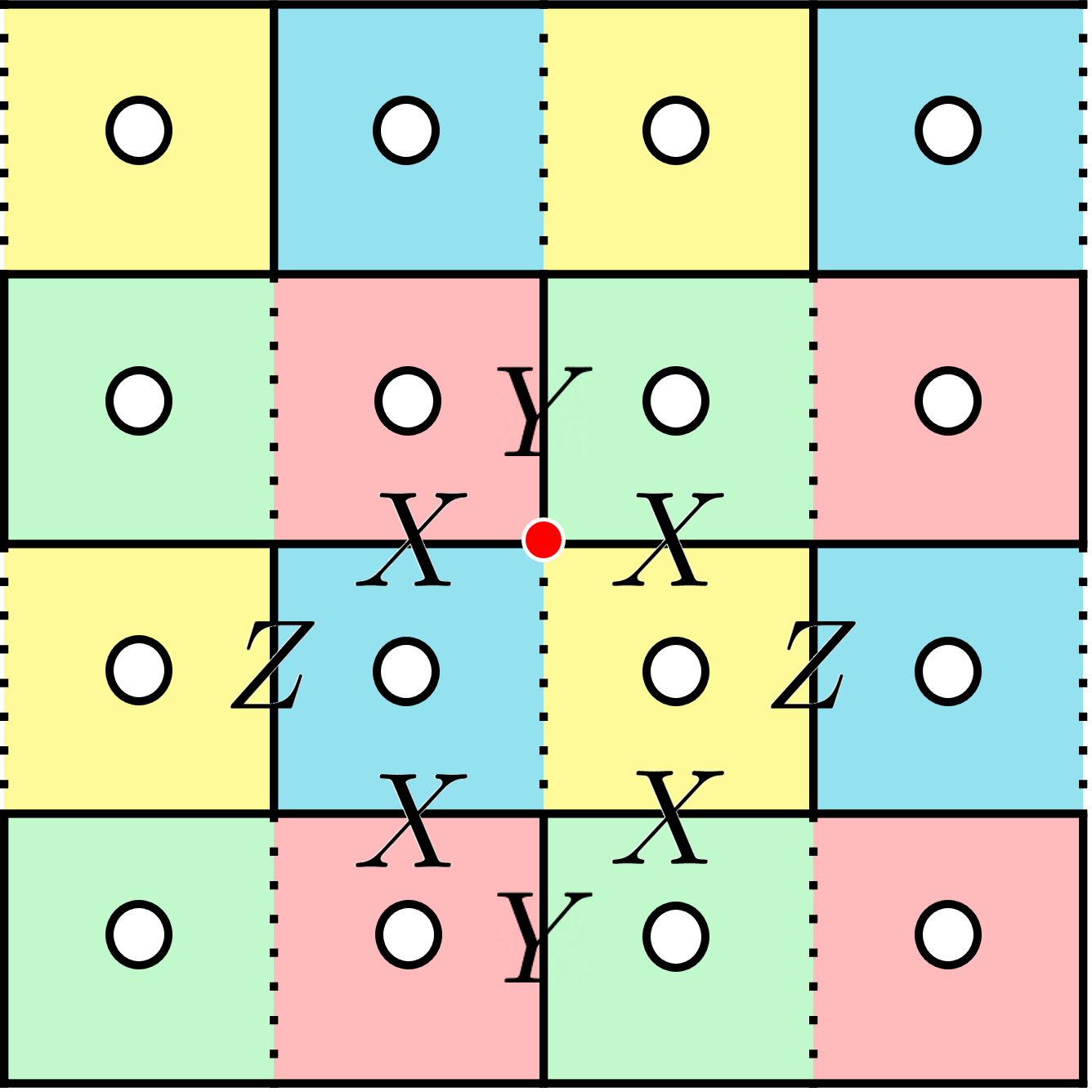}
    \end{gathered} 
    \begin{gathered}
    \xymatrix{
        {}\ar[r] &{}}
    \end{gathered} 
    (-1)\times \begin{gathered}
   \includegraphics[width=0.14\textwidth]{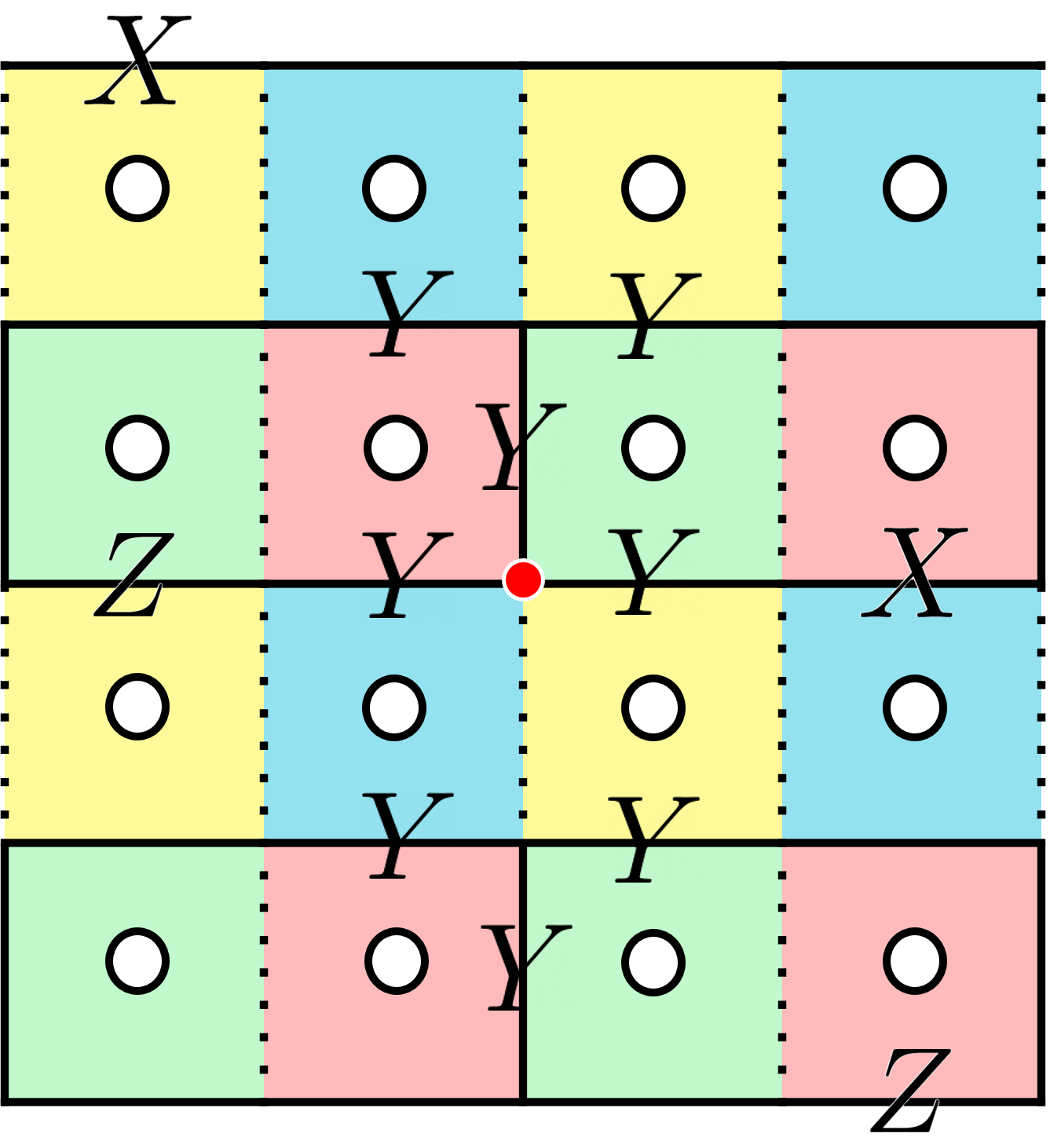}
    \end{gathered}.
\end{equation*}
However, the stabilizers enveloping green and red squares become
\begin{equation*}\label{eq:r_1.25_trivial_stabilizer}
    (-1)\times \begin{gathered}
   \includegraphics[width=0.14\textwidth]{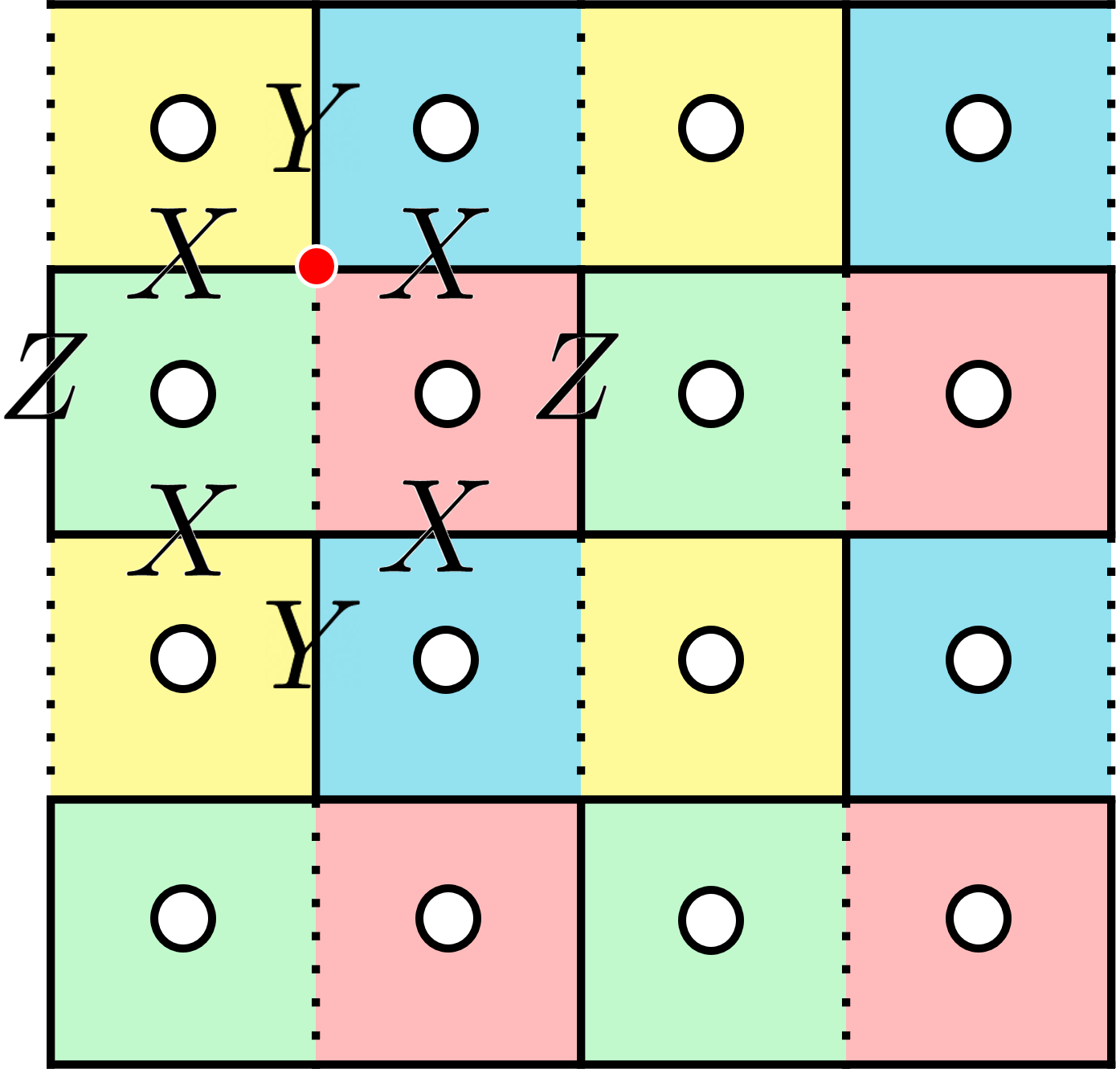}
    \end{gathered} 
    \begin{gathered}
    \xymatrix{
        {}\ar[r] &{}}
    \end{gathered} 
    (-1)\times \begin{gathered}
   \includegraphics[width=0.14\textwidth]{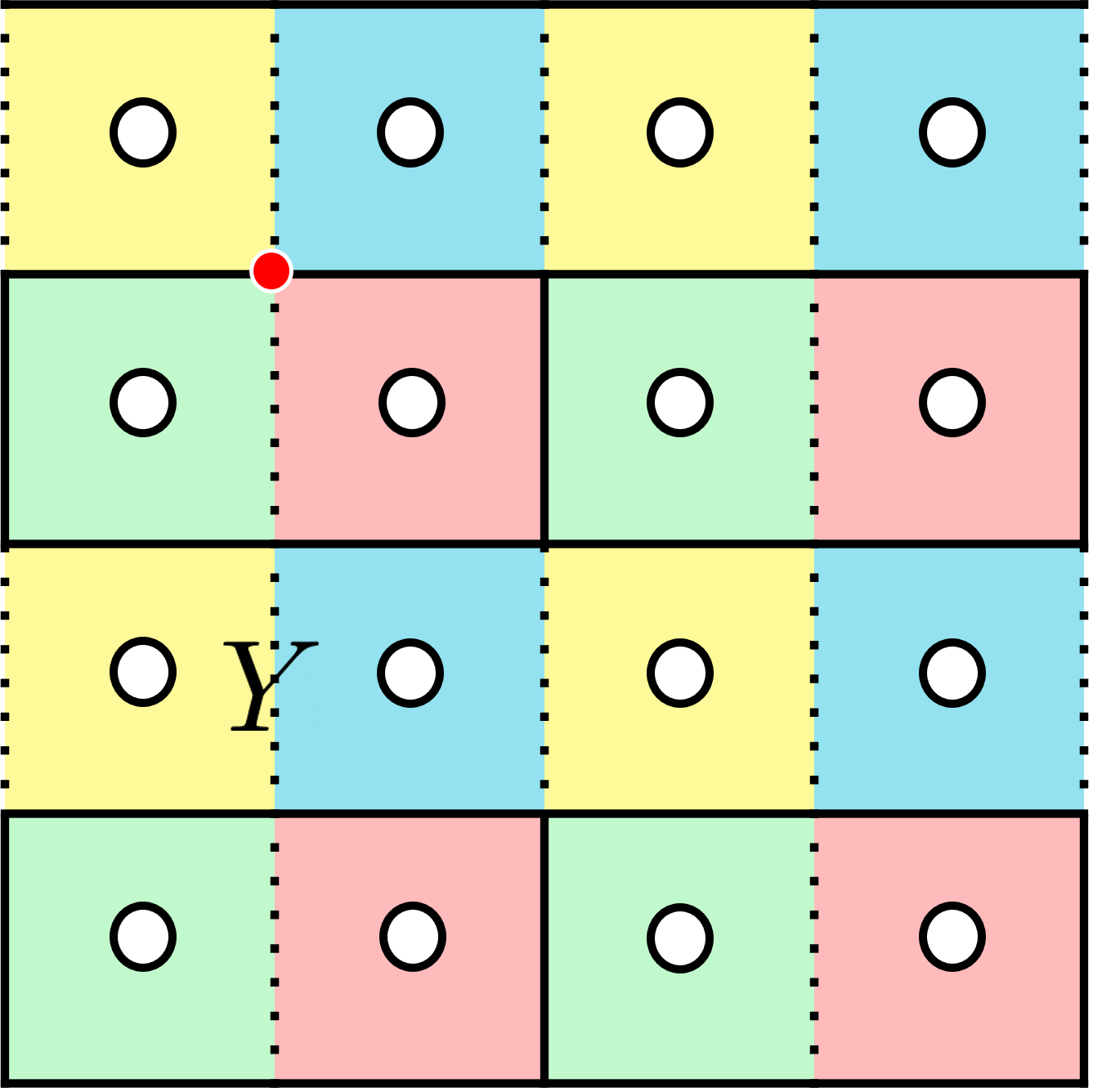}
    \end{gathered},
\end{equation*}
which is trivial. Then, we can simply remove the qubits on the boundaries between yellow and blue squares, which reduces the qubit-fermion ratio to $r=1.25$.

Similarly, we conjugate Eq.~\eqref{eq:r=1.5_hopping} and \eqref{eq:r=1.5_flux} by the Clifford circuit in Fig.~\ref{fig:square_lattice_step3}, and the result of these logical operators are listed in Fig.~\ref{fig:r1.25_logical}. This gives the super-compact fermion-to-qubit mapping demonstrated in Sec.~\ref{sec:super_compact} by a re-pairing of Majorana fermions (Fig.~\ref{fig:super_compact_regroup}) and a slight lattice deformation.

\subsection{General Construction for compact fermion-to-qubit mappings}\label{sec:Clifford_construction}


In this section, we describe a general method to construct fermion-to-qubit mappings with a reduced qubit-fermion ratio from the exact bosonization. The exact bosonization contains gauge constraints (stabilizers) Eq.~\eqref{eq:gauge constraint at vertex} supported on faces $f$ (northeast to vertices $v$), and we rename $G_v$ as $G_f$ for convenience. We enlarge the unit cell and will show that it is always possible to apply finite-depth gLU operators such that a portion of stabilizers can be mapped to a single Pauli matrix. More precisely, we are going to prove that the stabilizer on each white face below can be mapped to a single Pauli matrix:
\begin{equation}
    \includegraphics[width=0.3\textwidth]{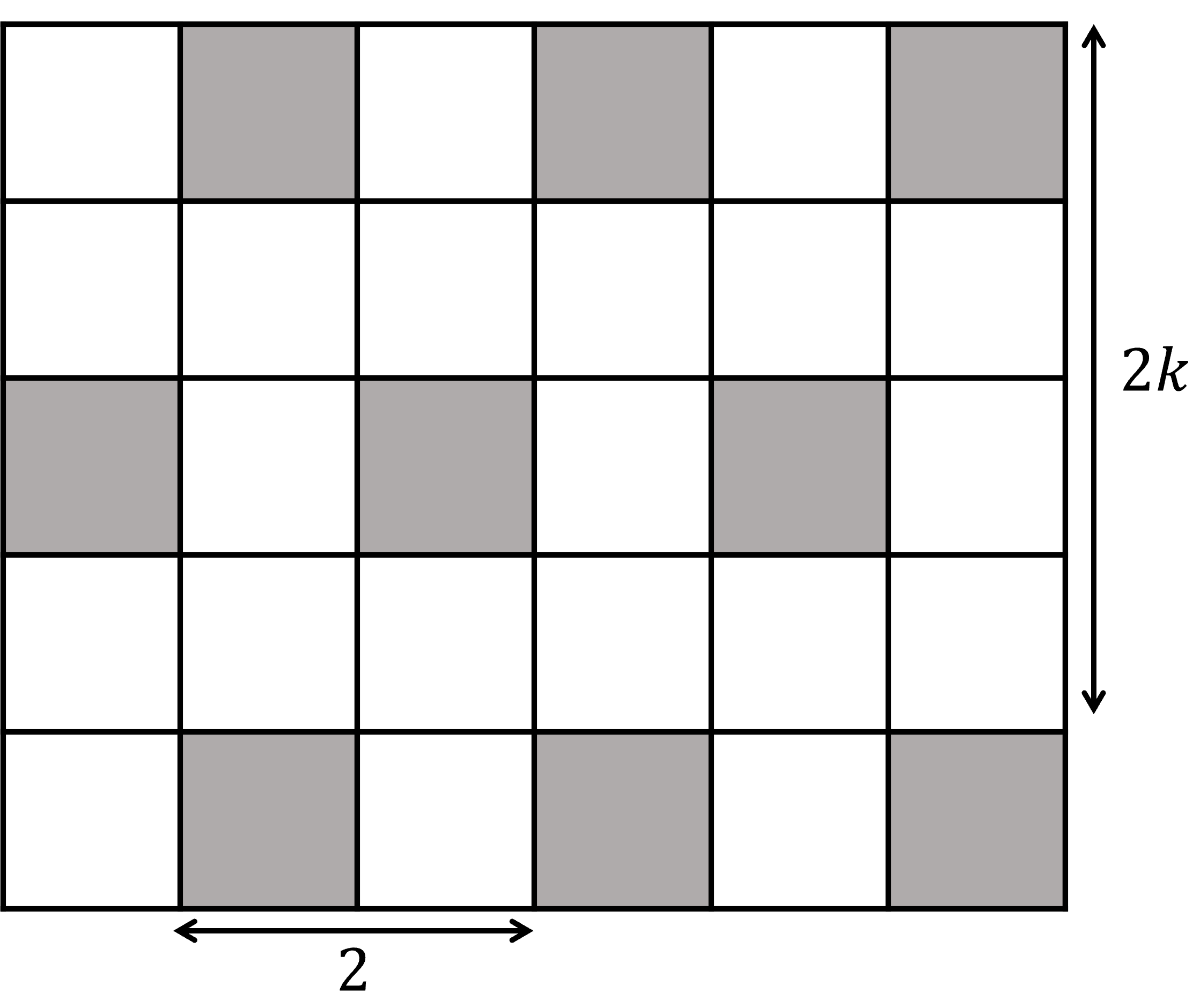},
\end{equation}
where $k$ can be any positive integer.\footnote{The portion of grey faces over all faces is $\frac{1}{2k}$. After removing stabilizers on white faces, the qubit-fermion ratio becomes $r=1+\frac{1}{2k}$.}

Instead of transforming $G_f$ on white faces directly, we are going to prove a stronger statement: the gauge constraints $G_f$ (Eq.~\eqref{eq:gauge constraint at vertex}) on white faces, the hopping operators $U_e$ (Eq.~\eqref{eq: Ue on horizontal edge}) across horizontal edges, and the operators 
\begin{eqs}
    G'_f \equiv
    \begin{gathered}
        \xymatrix@=1cm{%
        \ar@{}[dr]|{\mathlarger f} \ar@{-}[r]|{} \ar@{-}[d]|{}&{}\ar@{-}[r]|{\displaystyle Y} & {}\ar@{-}[d]|{\displaystyle Z} \\
        \ar@{-}[r]|{} &{}\ar@{-}[u]|{\displaystyle X} &
        }
    \end{gathered},
\end{eqs}
on grey faces, can all be mapped to a single Pauli matrix simultaneously under a finite-depth gLU circuit. These operators on the square lattice are shown as
\begin{eqs}
    \includegraphics[width=0.35\textwidth]{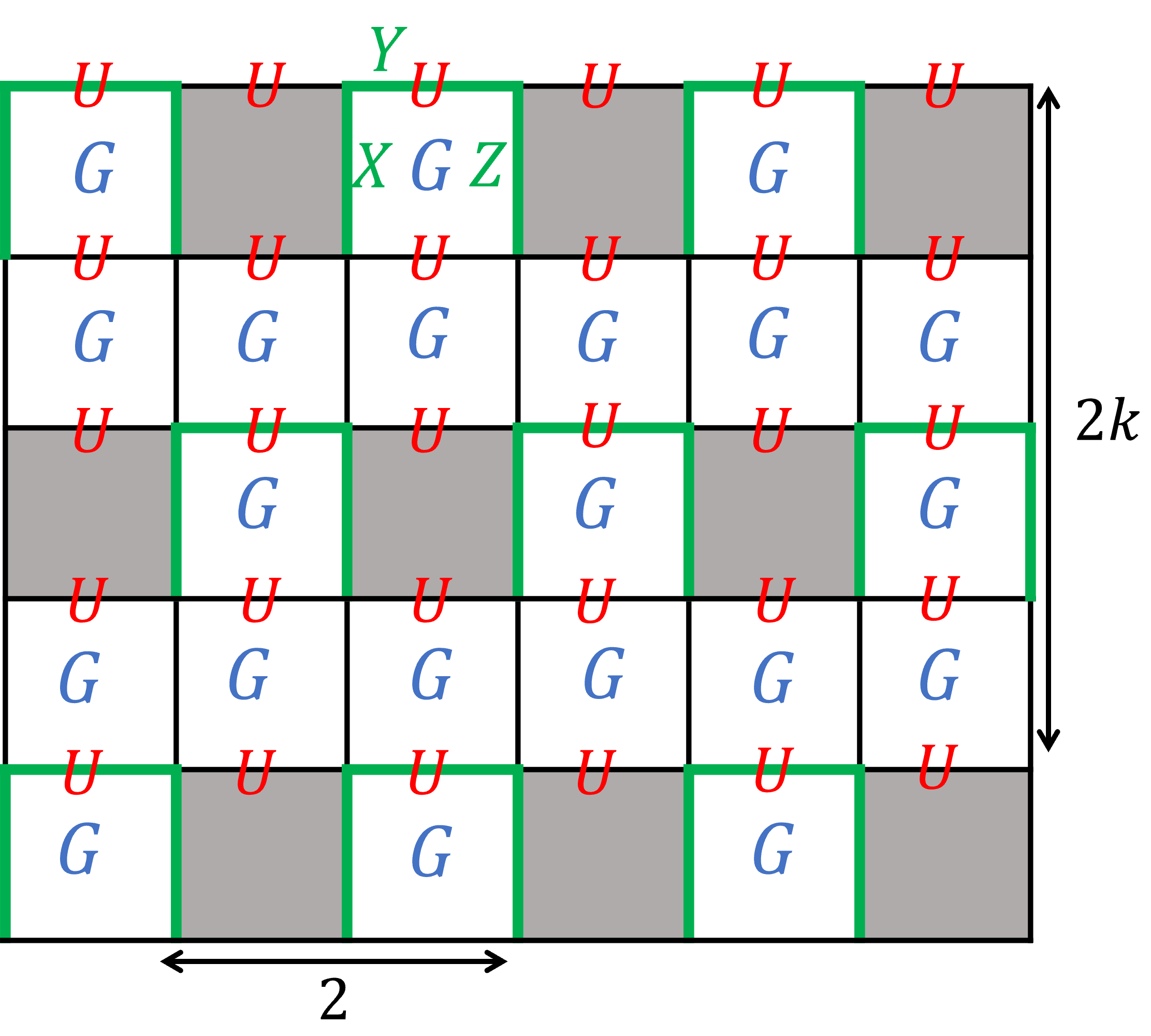}.
\label{eq: separators}
\end{eqs}
To prove the above statement, we need to introduce a lemma:
\begin{lemma}\label{lemma: QCA to circuit}
    Given $\tilde Z_e$ and $\tilde X_e$ for all edges that are products of Pauli matrices on a neighborhood of the edge $e$ satisfying the Pauli algebra,
    \begin{equation*}
        [\tilde X_e, \tilde X_{e'}]=[\tilde Z_e, \tilde Z_{e'}]=0, \quad \tilde X_e \tilde Z_{e'} = (-1)^{\delta_{e,e'}} \tilde Z_{e'} \tilde X_e,
    \end{equation*}
    there exists a finite-depth gLU transformation mapping $\tilde X_e, \tilde Z_e$ to $X_e, Z_e$ (a single Pauli on edge $e$).
\end{lemma}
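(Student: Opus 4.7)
The plan is to interpret the assignment $\alpha: X_e \mapsto \tilde X_e$, $Z_e \mapsto \tilde Z_e$ as a Clifford quantum cellular automaton (QCA) on the qubit lattice indexed by edges, and then invoke the triviality of $2$d Clifford QCA established in Refs.~\cite{FH20, Haah_Clifford_QCA21}. The hypotheses of the lemma state exactly that $\alpha$ is a Pauli-algebra automorphism (it preserves the commutation and anticommutation relations on the generators) and that it is locality-preserving in the sense that each $\tilde X_e$ and $\tilde Z_e$ is supported in a bounded neighborhood of $e$. These are precisely the defining conditions of a Clifford QCA, so the lemma reduces to the known classification.

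Concretely, I would proceed in three steps. First, I would check that $\alpha$ extends to a $\ast$-algebra automorphism of the full local Pauli algebra: given the commutation relations among the images, the map $\prod_e X_e^{a_e} Z_e^{b_e} \mapsto \prod_e \tilde X_e^{a_e} \tilde Z_e^{b_e}$ (with appropriate sign conventions) is a well-defined homomorphism that preserves all commutators. Second, I would apply the $2$d Clifford QCA triviality theorem to obtain a finite-depth Clifford circuit $U$, possibly after tensoring the system with ancilla qubits initialized in a product state, such that conjugation by $U$ implements $\alpha$, i.e.\ $U X_e U^\dagger = \tilde X_e$ and $U Z_e U^\dagger = \tilde Z_e$. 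Third, the gLU transformation $V = U^\dagger$ then maps $\tilde X_e, \tilde Z_e$ back to the single-edge Paulis $X_e, Z_e$, which is exactly what the lemma demands.

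The main obstacle is matching the precise hypotheses of the QCA triviality theorems, which are typically stated for translation-invariant QCA on a regular lattice, possibly up to a residual shift component. In all applications in this paper (in particular Sec.~\ref{sec:Clifford_construction}), translation invariance is automatic after enlarging the unit cell, so this assumption is harmless. The potential shift component is handled by the fact that in two spatial dimensions there is no GNVW-style index obstruction: any residual lattice translation can be absorbed into a finite-depth circuit after adjoining ancilla qubits and conjugating by an appropriate arrangement of SWAP gates, which is precisely why the gLU framework (as opposed to a plain finite-depth circuit on the original Hilbert space) appears in the statement. For the non-translation-invariant case, one can instead glue local inverses of $\alpha$ into a global finite-depth circuit via the standard overlapping-blocks construction of Refs.~\cite{FH20, Haah_Clifford_QCA21}. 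With these points settled, the lemma follows immediately from the cited classification.
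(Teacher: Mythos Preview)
Your proposal is correct and follows essentially the same route as the paper: define the Clifford QCA $\alpha$ by $X_e\mapsto\tilde X_e$, $Z_e\mapsto\tilde Z_e$, invoke the $2$d triviality results of Refs.~\cite{FH20,Haah_Clifford_QCA21} to decompose it into a finite-depth Clifford circuit times a shift, and absorb the shift via ancilla (hence gLU rather than a bare circuit). The paper's only extra detail is the explicit trick of pairing the shift with an opposite shift on ancilla qubits so the net flow vanishes and becomes a finite-depth circuit; this is exactly the SWAP-with-ancilla maneuver you allude to.
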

\begin{proof}
    The (Clifford) quantum cellular automata (QCA) in two spatial dimensions are simply (Clifford) local unitary circuits and shifts \cite{FH20, Haah_Clifford_QCA21}. The map $\alpha$
    \begin{eqs}
        \alpha(X_e) = \tilde X_e,
        \quad
        \alpha(Z_e) = \tilde Z_e,
    \label{eq: definition of QCA}
    \end{eqs}
    defines a QCA and therefore can be decomposed to a Clifford circuit and shifts. For the shift operator, we can introduce ancilla in the $|0\rangle$ states and define the shift operator moving the ancilla in the opposite direction, such that the net flow of qubits is zero. Then, this shift operator can be expressed by a local unitary circuit (involving the ancilla degrees of freedom). In the end, these ancilla are still in the $|0\rangle$ states and can be removed by a finite-depth gLU transformation. Therefore, there exists a finite-depth gLU transformation from $X_e, Z_e$ to $\tilde X_e, \tilde Z_e$ and vice versa.
\end{proof}
\begin{lemma}
    Given operators $\tilde Z_e$ (separators) and $\bar X_e$ (flippers) that are products of Pauli matrices on a neighborhood of the edge $e$ satisfying
    \begin{equation}
        [\tilde Z_e, \tilde Z_{e'}]=0, \quad \bar X_e \tilde Z_{e'} = (-1)^{\delta_{e,e'}} \tilde Z_{e'} \bar X_e,
    \label{eq: flipper algebra}
    \end{equation}
    there exist operators $\tilde X_e$ that are products of Pauli matrices on a neighborhood of edges $e$ such that
    \begin{equation*}
        [\tilde X_e, \tilde X_{e'}]=[\tilde Z_e, \tilde Z_{e'}]=0, \quad \tilde X_e \tilde Z_{e'} = (-1)^{\delta_{e,e'}} \tilde Z_{e'} \tilde X_e.
    \end{equation*}
    In other words, if the flippers do not commute with themselves, they can be modified such that the Pauli algebra is satisfied.
\end{lemma}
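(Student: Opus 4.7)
The plan is to modify each flipper $\bar X_e$ by right-multiplying it with a suitable product of separators, choosing the exponents so that the resulting operators $\tilde X_e$ pairwise commute while preserving the correct anticommutation pattern with the $\tilde Z_{e''}$. Because the $\tilde Z_{e''}$ pairwise commute and $\bar X_e$ anticommutes with $\tilde Z_{e''}$ only when $e''=e$, right-multiplying $\bar X_e$ by a product of separators cannot change how it commutes with any $\tilde Z_{e''}$, but it does flip how it commutes with another flipper $\bar X_{e'}$ whenever the product contains a factor of $\tilde Z_{e'}$. This is exactly the handle needed to repair the Pauli algebra.

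Concretely, I set
\begin{equation*}
    \tilde X_e = \bar X_e \prod_{e'} \tilde Z_{e'}^{\,s_{e,e'}},
\end{equation*}
with coefficients $s_{e,e'}\in\{0,1\}$ to be determined. A direct calculation using Eq.~(\ref{eq: flipper algebra}) yields
\begin{equation*}
    \tilde X_e \tilde X_{e'} = (-1)^{s_{e,e'}+s_{e',e}}\, c_{e,e'}\, \tilde X_{e'} \tilde X_e,
\end{equation*}
where $c_{e,e'}=\pm 1$ records whether $\bar X_e$ and $\bar X_{e'}$ commute or anticommute. So it suffices to arrange $s_{e,e'}+s_{e',e}\equiv 1\pmod 2$ exactly when $c_{e,e'}=-1$. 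I would fix any total ordering $<$ on the edges and declare $s_{e,e'}=1$ iff $e>e'$ and $\{\bar X_e,\bar X_{e'}\}=0$, and $s_{e,e'}=0$ otherwise; this antisymmetric upper-triangular prescription satisfies the required parity condition by construction.

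Two things remain to be checked. First, each $\tilde X_e$ must still obey the correct algebra with every $\tilde Z_{e''}$, which is immediate since the inserted $\tilde Z$-factors commute with every separator, so the anticommutation of $\tilde X_e$ with $\tilde Z_{e''}$ is inherited unchanged from $\bar X_e$. Second, and more delicately, $\tilde X_e$ must remain supported on a bounded neighborhood of $e$. This is the main point to verify, but it follows cleanly from locality: Pauli operators with disjoint support automatically commute, so $c_{e,e'}=-1$ can only occur when the supports of $\bar X_e$ and $\bar X_{e'}$ overlap, which forces $e'$ into a bounded neighborhood of $e$. Hence the product defining $\tilde X_e$ has only finitely many non-trivial factors and is locally supported.

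At heart this is a linear-algebra argument over $\mathbb{F}_2$: package the pairwise anticommutation data as a symmetric $\mathbb{F}_2$-matrix and split it as $S+S^{T}$ with $S$ strictly upper-triangular in the chosen ordering. The only non-formal input is the locality check above, which the hypothesis on $\bar X_e$ supplies for free, so I do not anticipate any serious obstacle.
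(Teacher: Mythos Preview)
Your proposal is correct and follows essentially the same idea as the paper: multiply each flipper by suitable separators to cancel unwanted anticommutations, using the fact that adjoining $\tilde Z_{e'}$ to $\bar X_e$ toggles only its commutation with $\bar X_{e'}$ and leaves all relations with the $\tilde Z$'s intact. The paper's proof is terser---it treats a single anticommuting pair $(e,e')$ and leaves the iteration implicit---whereas your global upper-triangular prescription and explicit locality argument make the simultaneous construction and the bounded-support conclusion more transparent.
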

\begin{proof}
    If $\bar X_e$ and $\bar X_{e'}$ do not commute,
    \begin{eqs}
        \bar X_e \bar X_{e'} = - \bar X_{e'} \bar X_e,
    \end{eqs}
    we define
    \begin{eqs}
        \tilde X_e \equiv \bar X_e \tilde Z_{e'}, \quad \tilde X_{e'}\equiv \bar X_{e'}.
    \label{eq: redefine X}
    \end{eqs}
    Notice that $\tilde Z_{e'}$ only affects the commutation relation between $e$ and $e'$ and this fixes the commutation for the $X$ part and leaves $Z$ part unchanged. Therefore, $\tilde X_e$ and $\tilde Z_e$ satisfy the Pauli algebra.
\end{proof}
The operators $\tilde{Z}_e$ and $\bar X_e$ are call separators and flippers \cite{FHH18}. Once the separators and flippers are given, a QCA is defined by Eq.~\eqref{eq: definition of QCA} (after defining $\tilde X_e$ by Eq.~\eqref{eq: redefine X}). By lemma~\ref{lemma: QCA to circuit}, the separator can be mapped to a single Pauli matrix by a finite-depth gLU transformation.

\begin{figure}[h]
    \centering
    \includegraphics[width=0.4\textwidth]{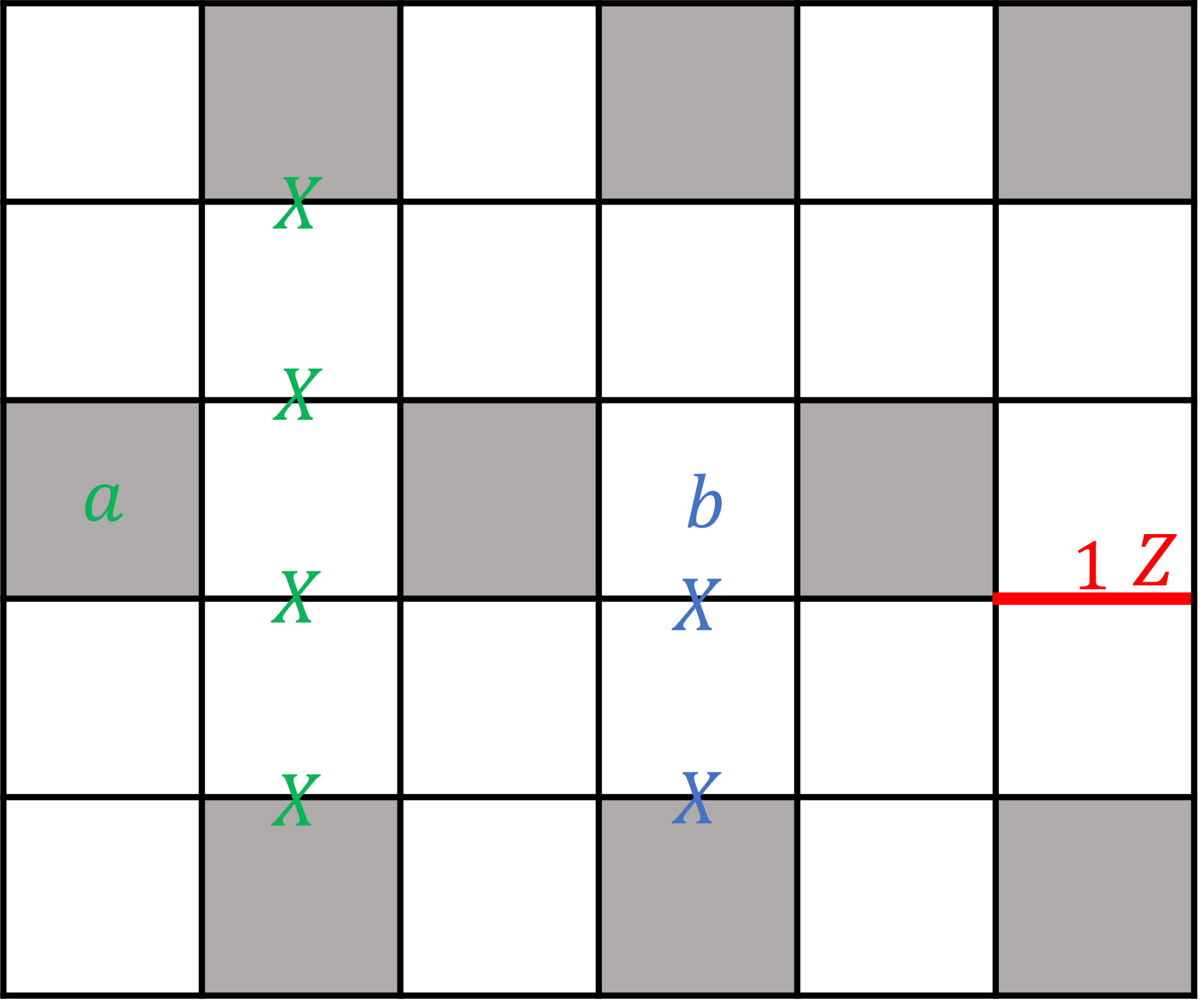}
    \caption{The (potential) flippers. For $G'_a$ on the grey face $a$, its flipper is the product of $X$ connecting two grey faces on its right column, shown by the green operator. For $G_b$ on the white face, its potential flipper is the product of $X$ connecting to a grey face below, shown by the blue operator. This potential flipper may anti-commute with $G'$ on a grey face, which can be fixed by attaching the flipper for this $G'$. For $U_{e_1}$ on a horizontal edge $e_1$, the potential flipper is $Z_{e_1}$, which flips exactly one $U_e$ and anti-commute with some $G_f$ and $G'_f$ on white and grey faces. This can be fixed by attaching the flippers for these $G_f$ and $G'_f$ to the potential flipper of $U_{e_1}$.}
    \label{fig:flippers}
\end{figure}

The operators $G_f$ on white faces, $U_e$ on horizontal edges, and $G'_f$ in grey faces in \eqref{eq: separators} are the separators $\tilde Z_e$.
Now, we are going to describe their flippers:
\begin{enumerate}
    \item For $G_f'$ on grey faces, we define its flippers by the product of $X_e$ ($m$-strings of the toric code) connecting two grey faces on the column to the right, as shown in Fig.~\ref{fig:flippers}. It can be checked that this $m$-string only violates exactly one $G'_f$ and commute with all other separators $G_f$ and $U_e$.
    \item A potential flipper\footnote{The potential flipper is an operator satisfying the algebra \eqref{eq: flipper algebra} partially. For example, it may anti-commute with extra separators $\tilde Z_{e'}$. This issue can be fixed by attaching other operators to this potential flipper.} for the separator $G_f$ on a white face is the product of $X$ connecting the white face to the grey face below (Fig.~\ref{fig:flippers}). This operator flips exactly one $G_f$ on white faces and commutes with $U_e$, but it may anti-commute with a $G'_f$ on a grey face. In this case, we can always attach the flipper for this $G'_f$ (found in step~1) to the potential flipper. This operator becomes the true flipper for a single $G_f$.
    \item For $U_e$ on a horizontal edge, we start with a potential flipper $Z$ on this edge $e$. It is obvious that it flips only one $U_e$ and may anti-commute a finite number of $G_f$ and $G_f'$ on white and grey faces. Since we have already found the flippers for $G_f$ and $G_f'$, we can attach these flippers to the potential flipper such that the combined operator commutes will all separators except one $U_e$.
\end{enumerate}

We have found the complete set of separators and flippers on the square lattice. By Lemma~\ref{lemma: QCA to circuit}, the $G_f$ on each white face can be mapped to a single Pauli matrix.



\section{Equivalence between fermion-to-qubit mappings and the exact bosonization}\label{sec:equivalence_relation}

In this section, we argue that any locality preserving fermion-to-qubit mappings\footnote{To be precise, we consider the mapping between local fermionic observables and local products of Pauli matrices.} in two spatial dimensions can be connected to the exact bosonization by a finite-depth gLU transformation. First, given a fermion-to-qubit mapping, it must contain the flux operators $\tilde W$ (images of the local fermion parity) and the gauge constraints $\tilde G$ (images of the product of fermionic hopping terms in a small closed loop). On a torus, we can define a Pauli stabilizer code as
\begin{eqs}
    H = - \sum \tilde G - \sum \tilde W.
\label{eq: stabilizer code}
\end{eqs}
Over two large cycles of the torus, we have the $4$-fold ground state degeneracy since we do not impose the fermionic constraints on the large cycles. The code distance is linear in the system size since the logical operator is the product of hopping along with the large cycles. It is proven in Ref.~\cite{Haah_Zp_21} that any translationally invariant $Z_p$ Pauli stabilizer model with a linear code distance is decomposed by a local Clifford circuit of constant depth into a finite number of copies of the toric code for any prime $p$.\footnote{In this paper, we only work on qubits, which corresponds to $p=2$. Therefore, the theorem in Ref.~\cite{Haah_Zp_21} is valid.} Since the degeneracy is 4 on the torus, the above stabilizer code Eq.~\eqref{eq: stabilizer code} must be a single copy of toric code up to a Clifford circuit. Therefore, $\tilde G$ and $\tilde W$ are related to $G_v$ and $W_f$ in the exact bosonization in Sec.~\ref{sec: review of 2d bosonization} by a gLU transformation (since the toric code defined on different lattice should be related by gLU transformation to add or remove qubits). 

In the following part of this section, we will explicitly demonstrate how to transform many well-known fermion-to-qubit mappings in literature to the exact bosonization.

\subsection{Bravyi-Kitaev superfast simulation}\label{sec:Bravyi_Kitaev}

The Bravyi-Kitaev superfast simulation (BKSF) in Ref.~\cite{bravyi_kitaev} is a method to encode fermionic operators into Pauli operators. BKSF encodes complex fermions at vertices $v$ by qubits on edges $e$. The key idea of BKSF is to assign an arbitrary ordering of edges around each vertex and write down the logical operators according to the ordering. For vertex $v$, we label the edges connected to $v$ by $(v,i)$,  $i=1,2,3,4$ on 2d square lattice, shown in Fig.~\ref{fig:BKSF_labeling}.

The logical operators $\tilde{A}^{\text{BK}}_e$ and $\tilde{B}^{\text{BK}}_v$ are defined as
\begin{eqs}
    &\tilde{A}^{\text{BK}}_e=X_e\prod_{(L(e),i)<(L(e),j)}Z_{L(e),i}\prod_{(R(e),k)<(R(e),l)}Z_{R(e),k},\\
    &\tilde{B}^{\text{BK}}_v=\prod_{e\supset v}Z_{(v,e)},
\end{eqs}
where $(L(e),j)$ is the label of edge $e$ on the vertex $L(e)$, $(R(e),l)$ is the label of edge $e$ on the vertex $R(e)$. The fermion-to-qubit mapping is
\begin{eqs}
    A_e &= i\gamma_{L(e)}\gamma_{R(e)} \longleftrightarrow \tilde A^{\text{BK}}_e, \\
    B_v &= - i \gamma_v \gamma^\prime_v \longleftrightarrow \tilde B^{\text{BK}}_v,
\end{eqs}
where $A_e$ and $B_v$ are fermionic operators defined in Sec.~\ref{sec:super_compact}. The product of $\tilde A^{\text{BK}}_e$ on any closed loop need to satisfy the condition Eq.~\eqref{eq: product of tilde Ae} due to the identity for Majorana operators

\begin{figure}
    \centering
    \includegraphics[width=0.3\textwidth]{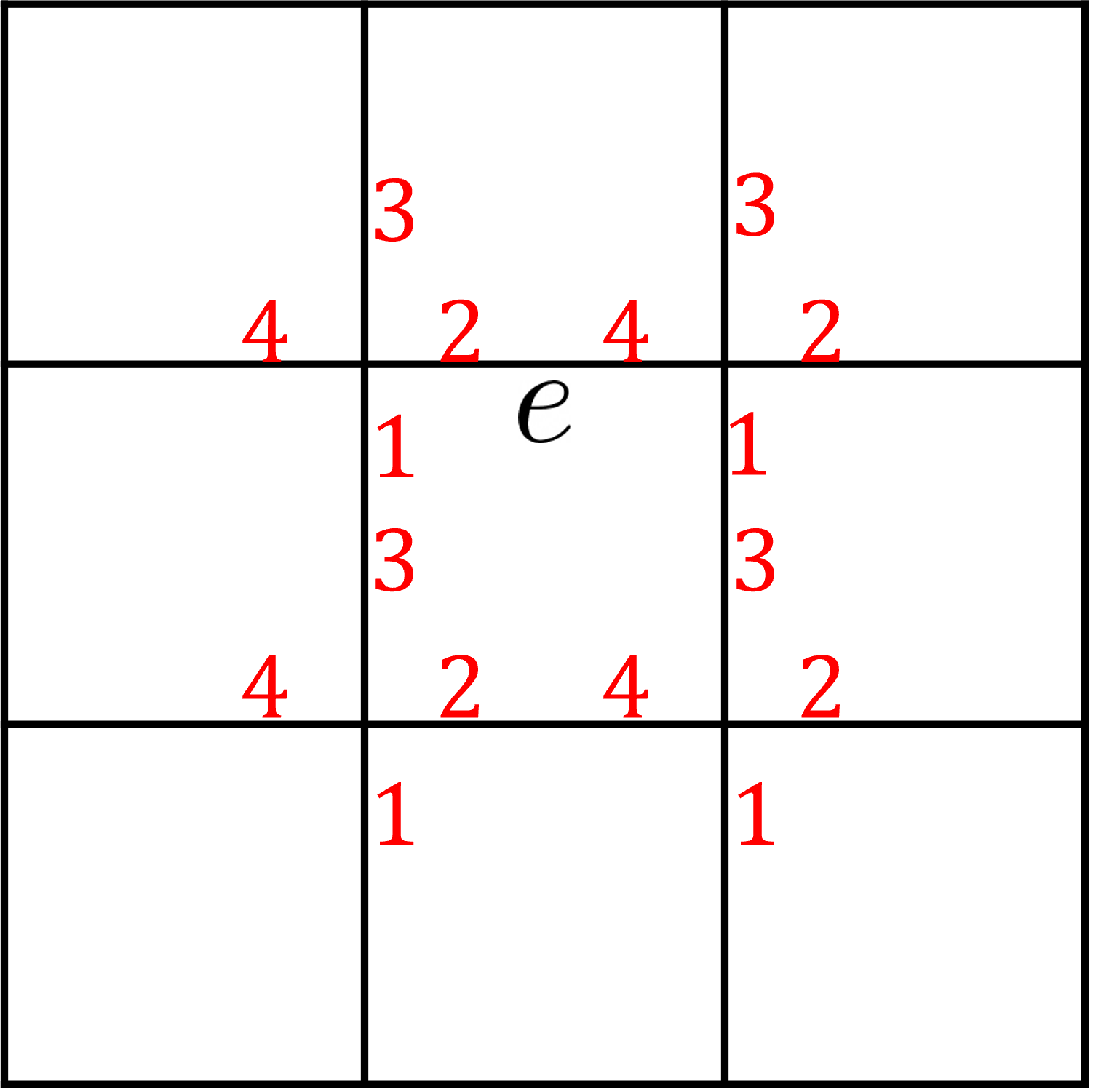}
    \caption{The ordering of edges on each vertex. The red numbers are the labels.}
    \label{fig:BKSF_labeling}
\end{figure}

By the convention in Fig.~\ref{fig:BKSF_labeling}, we have:

\begin{eqs}
    i \gamma_{L(e)} \gamma^\prime_{R(e)} \longleftrightarrow \tilde A^{\text{BK}}_e \tilde B^{\text{BK}}_{R(e)}
    =
    \begin{gathered}
    \xymatrix@=1cm{%
    &{}\ar@{-}[d]|{\displaystyle Y_e} \\{}\ar@{-}[r]|{\displaystyle Z}&
    }
    \end{gathered},\quad \begin{gathered}
    \xymatrix@=1cm{%
    {}\ar@{-}[r]|{\displaystyle Y_e}& \\{}\ar@{-}[u]|{\displaystyle Z}&
    }
    \end{gathered}.
\end{eqs}
We notice that this is the same logical operator as the exact bosonization in the dual lattice after we relabel the Pauli matrices $X$ and $Y$. The fermion parity terms in both cases are just a product of $Z$ around a vertex (a face in the dual lattice), and therefore the BKSF approach with this ordering convention is the same as the exact bosonization.



\subsection{Verstraete-Cirac auxiliary method}\label{sec:Verstraete_Cirac}
In this section, we demonstrate the equivalent relation between the Verstraete-Cirac mapping \cite{verstraete_cirac} and exact bosonization after regrouping Majorana fermions. The basic idea of the Verstraete-Cirac mapping is to eliminate the nonlocal Pauli $Z$-string from the 1d Jordan-Wigner transformation by introducing auxiliary qubits with gauge constraints. In this mapping, each site $i$ uses four Majorana modes $\gamma_i, \gamma_i', \widetilde{\gamma}_{i}, \widetilde{\gamma}_{i}'$  to encodes a complex fermion and an auxiliary complex fermion. For implementation, we put two qubits on each vertex, one for the physical complex fermion, the other for the auxiliary complex fermion. The Majorana operators $\widetilde{\gamma}_{i}$, $\widetilde{\gamma}_{i}'$ belong to the auxiliary complex fermion. The auxiliary fermions stay in the ground state of following Hamiltonian
\begin{equation}
    H_{aux}=\sum_{\{i,j\}}P_{ij}=i\sum_{\{i,j\}}\widetilde{\gamma}_{i}\widetilde{\gamma}_{j}',
\end{equation}
where the $\{i,j\}$ includes only pairs $(i,j)$ that connected by the directed edges in Fig.~\ref{fig:VC_auxiliary_ordering}. 
\begin{figure}[h]
    \centering
    \includegraphics[width=0.25\textwidth]{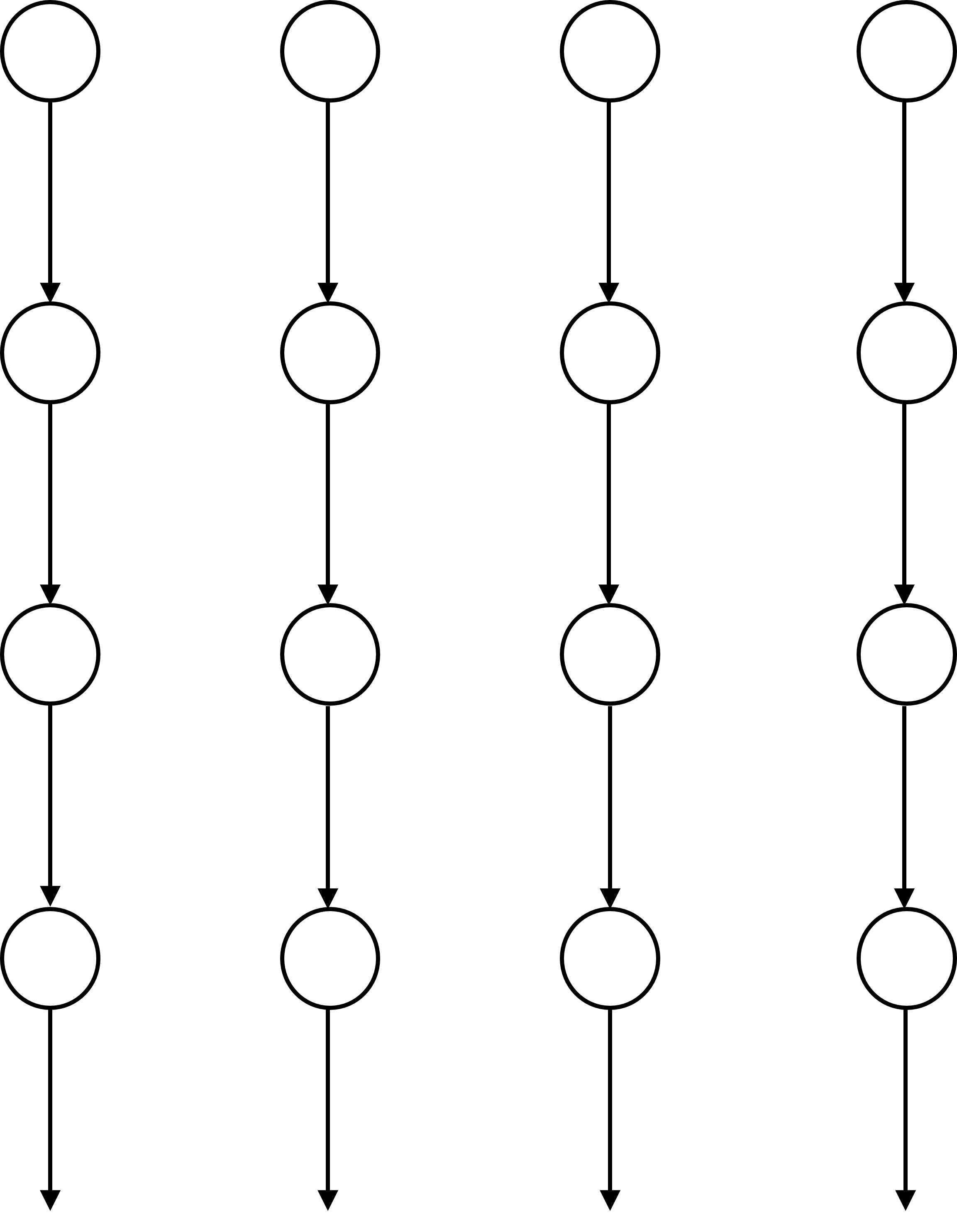}
    \caption{Graph structure of the auxiliary Hamiltonian $H_{aux}$.}
    \label{fig:VC_auxiliary_ordering}
\end{figure}
The hopping operator is modified as $c_i^\dagger c_j\rightarrow c_i^\dagger c_j(i\widetilde{\gamma}_{i} \widetilde{\gamma}_{j}')$. We apply the 1d Jordan-Wigner transformation with ordering $\{\gamma_{i_1}, \gamma'_{i_1} \} \rightarrow \{\tilde \gamma_{i_1}, \tilde \gamma'_{i_1} \} \rightarrow \{\gamma_{i_2}, \gamma'_{i_2} \} \rightarrow \{\tilde \gamma_{i_2}, \tilde \gamma'_{i_2} \} \rightarrow \cdots$, where $i_1, i_2, \cdots$ in our convention start with the first row from left to right, and move to the second and so on. However, the auxiliary Hamiltonian is a non-local Hamiltonian. To resolve this problem, we perform following substitution $P_{\{N-1,N+2\}}\rightarrow P_{\{N-1,N+2\}}P_{\{N,N+1\}}$, $P_{\{N-2,N+3\}}\rightarrow P_{\{N-2,N+3\}}P_{\{N-1,N+2\}}$ for all rows. Since all $P$ commute with each other, the auxiliary Hamiltonian $H_{aux}$ becomes local without changes in the ground state. Then the local gauge constraint (stabilizer) is $P_{ij}=\widetilde{\gamma}_i\widetilde{\gamma}_k'\widetilde{\gamma}_j\widetilde{\gamma}_l'=1$. The gauge constraint can be written as a Pauli stabilizer

\begin{eqs}
    P_{ik}=\begin{gathered}
    \xymatrix@=1cm{\ar@{-}[d]|{\displaystyle I_i}&\ar@{-}[d]|{\displaystyle Z_j}&\\
    \ar@{-}[r]|{\displaystyle \widetilde{X}_i}&\ar@{-}[r]|{\displaystyle \widetilde{Y}_j}& \\
    \ar@{-}[u]|{\displaystyle I_k} \ar@{-}[r]|{\displaystyle \widetilde{Y}_k}&\ar@{-}[r]|{\displaystyle \widetilde{X}_l} \ar@{-}[u]|{\displaystyle Z_l}&}
    \end{gathered}.
\end{eqs}
Pauli matrices $\{\widetilde{X}_n, \widetilde{Y}_n, \widetilde{Z}_n\}$ act on the auxiliary qubit $n$. We put physical qubits on the vertical edges and auxiliary qubits on the horizontal edges. Since physical qubits and auxiliary qubits are in different edges, we will not show the tilde in following text for convenience.

Its hopping operators ($S_e$ in Eq.~\eqref{eq: fermion_hopping})and fermion parity operators ($P_f$ in Eq.~\eqref{eq: fermion_parity}) are
\begin{eqs}\label{eq:VC_logical}
    &U_e=\begin{gathered}
   \xymatrix@=1cm{%
    {}\ar@{-}[d]|{\displaystyle X}
    &  \\
    {}\ar@{-}[r]|{\displaystyle Y_e} {}\ar@{-}[d]|{\displaystyle Y} & \\
    {}\ar@{-}[r]|{\displaystyle X}& 
    }
    \end{gathered}, \qquad \begin{gathered}
   \xymatrix@=1cm{%
    {}\ar@{-}[d]|{\displaystyle X}
    & {}\ar@{-}[d]|{\displaystyle X_e} \\
    {}\ar@{-}[r]|{\displaystyle Z}  &
    }
    \end{gathered} \\
    &W_f=\begin{gathered}
    \xymatrix@=1cm{%
    {}\ar@{-}[d]|{\displaystyle Z} \ar@{-}[r]|{}
    & {}\ar@{-}[d]|{} \\
    {}\ar@{-}[r]|{}  {}\ar@{}[ur]|{ f} &
    }
    \end{gathered}.
\end{eqs}

By conjugating the logical operators in Eq.~\eqref{eq:VC_logical} by the Clifford circuits shown in Fig.~\ref{fig:VC_Clifford}, the logical operators and stabilizer become

\begin{eqs}
    &U_e=\begin{gathered}
   \xymatrix@=1cm{%
    &  {}\ar@{-}[d]|{\displaystyle e}\\
    {}\ar@{-}[r]|{\displaystyle X} {}\ar@{-}[d]|{\displaystyle Z} & \\& 
    }
    \end{gathered}, \quad \begin{gathered}
   \xymatrix@=1cm{%
    &{}\ar@{-}[d]|{\displaystyle X} {}\ar@{-}[r]|{\displaystyle e}
    & \\
    {}\ar@{-}[r]|{\displaystyle Z}  &&
    }
    \end{gathered} \\
    &W_f=\begin{gathered}
   \xymatrix@=1cm{%
    &{}\ar@{-}[r]|{} & {}\ar@{-}[d]|{}\\
    {}\ar@{-}[r]|{\displaystyle Y}&\ar@{-}[u]|{} \ar@{}[ru]|{\displaystyle f} & {}\ar@{-}[l]|{}\\
    &{}\ar@{-}[u]|{\displaystyle Y}&
}
\end{gathered}\\
   &G_v =
    \begin{gathered}
   \xymatrix@=1cm{%
    &{}\ar@{-}[r]|{\displaystyle Z} & {v}\ar@{-}[d]|{\displaystyle Z}\\
    {}\ar@{-}[r]|{\displaystyle X}&\ar@{-}[u]|{\displaystyle XZ} & {}\ar@{-}[l]|{\displaystyle XZ}\\
    &{}\ar@{-}[u]|{\displaystyle X}&
}
\end{gathered}
\label{eq: VC bosonization map}
\end{eqs}

\begin{figure}[t]
    \centering
    \includegraphics[width=0.3\textwidth]{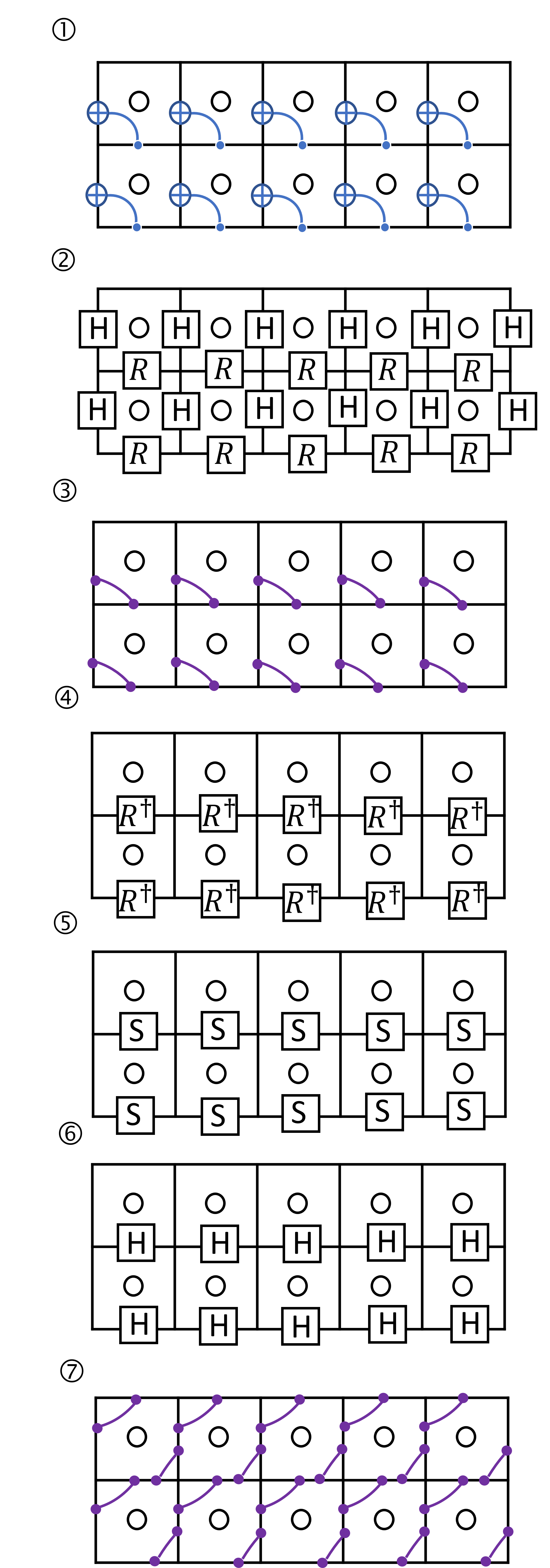}
    \caption{The finite-depth Clifford circuit to convert the Verstraete-Cirac mapping to the exact bosonization. The details of $H$, $R$, $S$ gates are discussed in Appendix~\ref{appendix:Clifford}.}
    \label{fig:VC_Clifford}
\end{figure}

The logical operators and stabilizer in Eq.~\eqref{eq: VC bosonization map} is exactly the logical operators and stabilizers of exact bosonization after a shift of Majorana fermions. If we shift the Majorana fermions in the exact bosonization as Fig.~\ref{fig:VC_shift} and re-pair them. Then we find the exact bosonization and the Verstraete-Cirac mapping are equivalent, as Fig.~\ref{fig:VC_logical}.

\begin{figure}[h]
    \centering
    \includegraphics[width=0.2\textwidth]{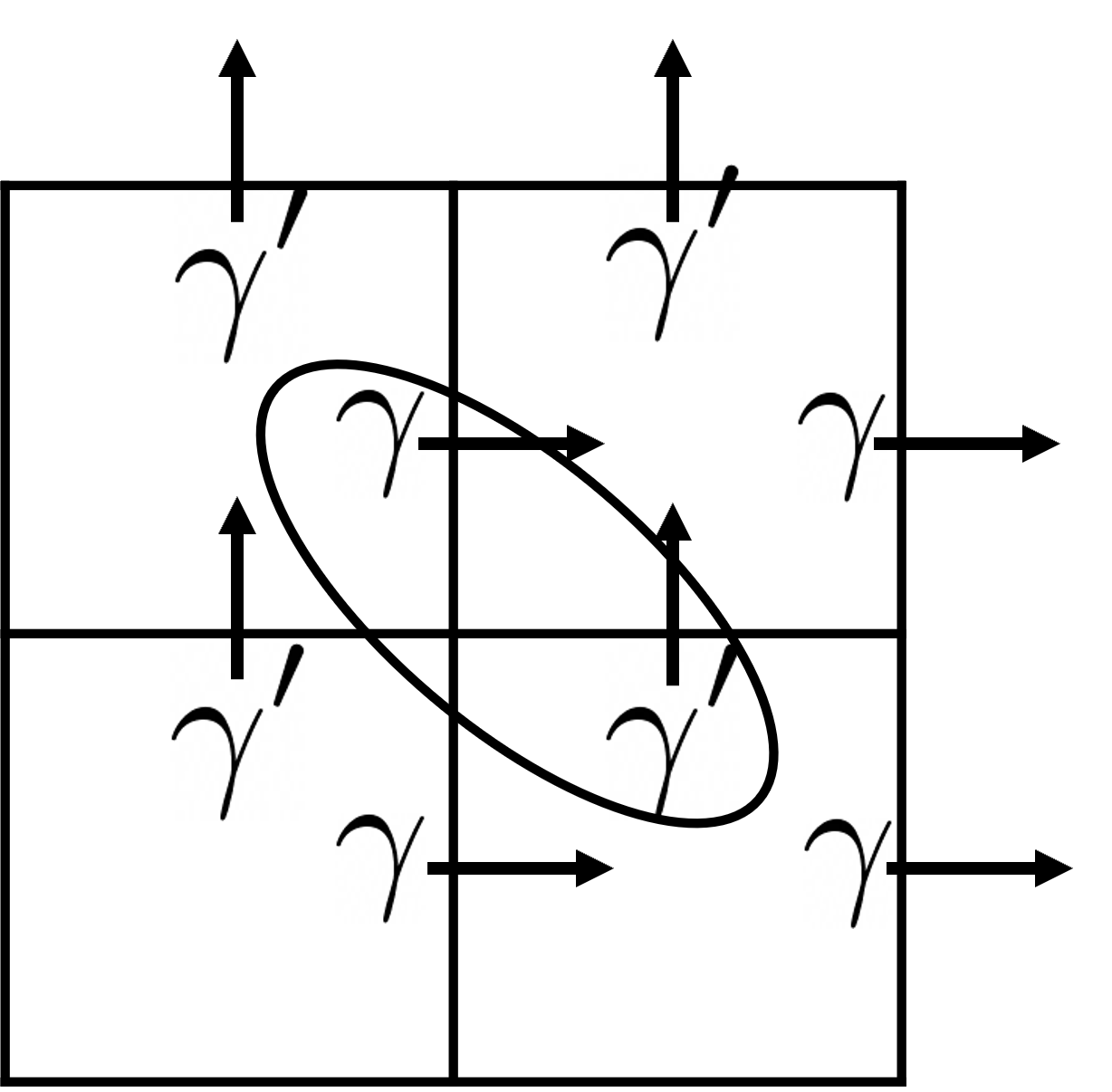}
    \caption{To match our exact bosonization to Verstraete-Cirac mapping, we shift our Majorana modes on each face as following way: 1. shift $\gamma_f'$ upward and let it be $\gamma$ on the new face; 2. shift $\gamma_{f}$ rightward and let it be $\gamma'$ on the new face.}
    \label{fig:VC_shift}
\end{figure}

\begin{figure}[h]
    \centering
    \includegraphics[width=0.3\textwidth]{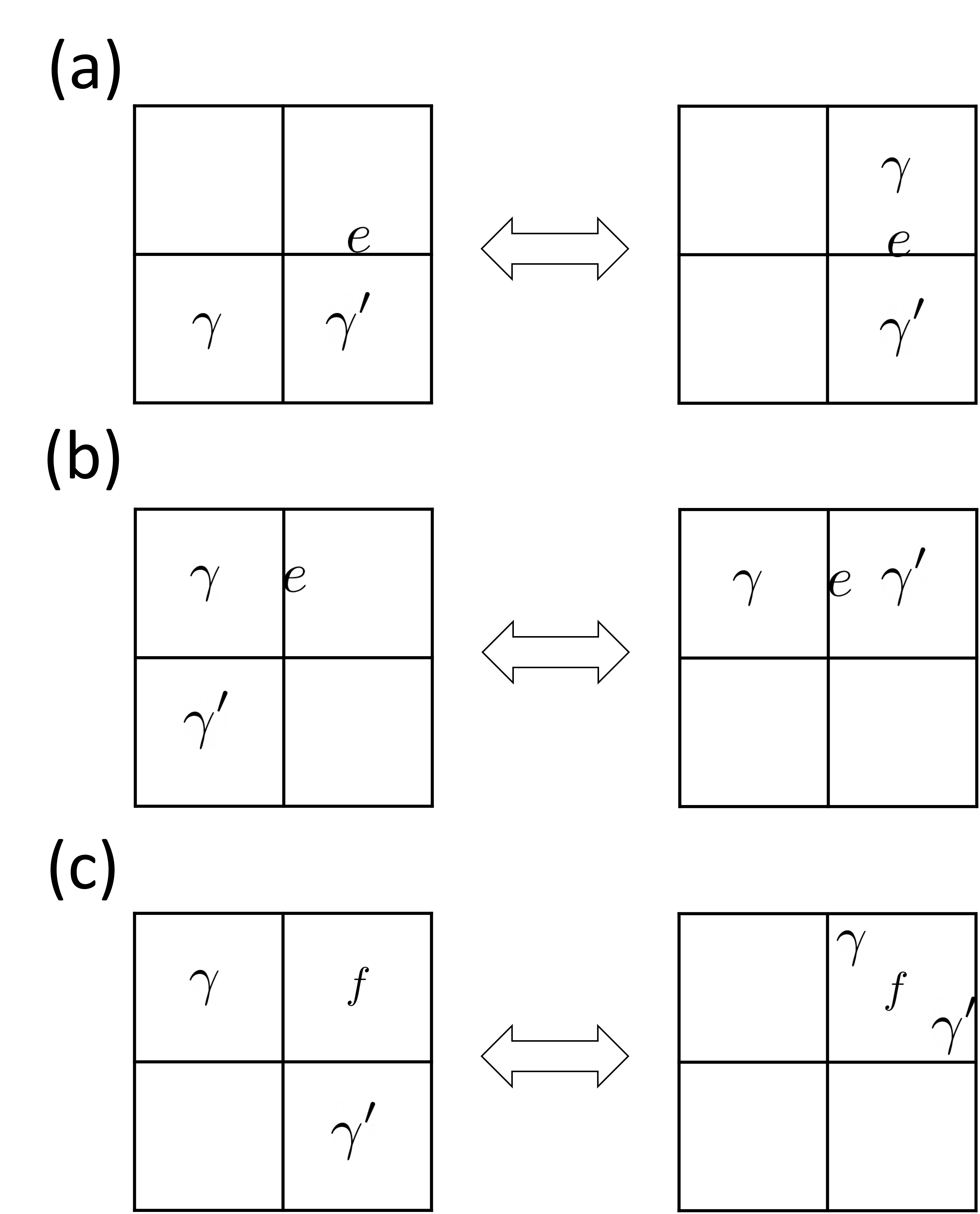}
    \caption{Correspondence of logical operators between the exact bosonization and the Verstraete-Cirac mapping}
    \label{fig:VC_logical}
\end{figure}

\subsection{Kitaev's honeycomb model}\label{sec:Kitaev_honeycomb}

The Hamiltonian of Kitaev's honeycomb model \cite{kitaev_honeycomb} can be written as
\begin{eqs}\label{eq:honeycomb_hamiltonian}
    H&=-J_x\sum_{x-\text{links}}X_j^A X_k^B-J_y\sum_{y-\text{links}}Y_j^A Y_j^B\\
    &- J_z\sum_{z-\text{links}}Z_j^A Z_k^B,
\end{eqs}
where $x$, $y$, $z$ links are shown in Fig.~\ref{fig:honeycomb_bosonization}.
The qubit at each site $j$ can be represented by four Majorana operators $b_j^x$, $b_j^y$, $b_j^z$ and $\gamma_j$ with an additional constraint $b_j^x b_j^y b_j^z\gamma_j=1$ to eliminate the redundancy at each site $j$. The Pauli matrices at each site $j$ can be represented as follows:
\begin{eqs}
    X_j=ib_j^x\gamma_j,\quad Y_j=ib_j^y\gamma_j, \quad Z_j=ib_j^z\gamma_j,
\label{eq: Kitaev qubit to majorana}
\end{eqs}
or equivalently (after multiplying by $D_j$)
\begin{eqs}
    X_j=-ib_j^yb_j^z, \quad Y=-ib_j^z b_j^x, \quad Z_j=-ib_j^xb_j^y.
\end{eqs}
Then, a free-fermion Hamiltonian
\begin{eqs}
    H=\frac{i}{2}\sum_{e_{jk}}J_{\alpha_{jk}}\gamma_j^A \gamma_k^B
\end{eqs}
is equivalent to a sector of Eq.~\eqref{eq:honeycomb_hamiltonian}, where the index $\alpha$ takes values $x$, $y$ or $z$ depending on the direction of the link $jk$. Focusing on the algebra generated by $\gamma_j$, the mapping Eq.~\eqref{eq: Kitaev qubit to majorana} can be written as \cite{CKR18}:
\begin{equation}
    i \gamma^A_j \gamma^B_k \longleftrightarrow
    \begin{cases}
        X^A_j X^B_k \quad \text{if $jk \in x$-link},\\
        Y^A_j Y^B_k \quad \text{if $jk \in y$-link},\\
        Z^A_j Z^B_k \quad \text{if $jk \in z$-link},
    \end{cases}
\label{eq: Kitaev honeycomb logical mapping}
\end{equation}
and the product of Majorana hoppings along a hexagon is proportional to identity, which gives a gauge constraint on the qubit Hilbert space. It is shown \cite{CKR18} that by embedding the honeycomb lattice into the square lattice as Fig.~\ref{fig:honeycomb}, relabeling $\gamma^A, \gamma^B$ by $\gamma_f, \gamma^\prime_f$, and performing single-qubit rotations, the complete bosonization map can be expressed as
\begin{align}
    i\times
    \begin{gathered}
    \xymatrix@=1.2cm{
    &\\
    {}\ar@{}[ur]|{\displaystyle \gamma_{L(e)}} \ar@{}[dr]|{\displaystyle \gamma'_{R(e)}} \ar@{-}[r]^{ \mathlarger e} &  \\
    &}\end{gathered}
    &\begin{gathered}
    \xymatrix{
    {}\ar@{<->}[r] &{}}
    \end{gathered}
    \hspace{0.3cm}
    \begin{gathered}
    \xymatrix@=1cm{%
    &{}\ar@{-}[d]|{\displaystyle Z} \\{}\ar@{-}[r]|{\displaystyle X_e}&
    }
    \end{gathered},
    \\[-15pt]
    i\times
    \begin{gathered}
    \xymatrix@=1.2cm{
    &{}\ar@{-}[d]^{\mathlarger e}&\\
    \ar@{}[ur]|{\displaystyle \gamma_{L(e)}}&\ar@{}[ur]|{\quad \displaystyle \gamma'_{R(e)}}&}
    \end{gathered}
    &\begin{gathered}\xymatrix{
    {}\ar@{<->}[r] &{}}
    \end{gathered}
    \begin{gathered}
    \xymatrix@=1cm{%
    {}\ar@{-}[r]|{\displaystyle Z} &{} \\{}\ar@{-}[u]|{\displaystyle X_e}& {}}
    \end{gathered},
    \label{eq: Ue on horizontal edge}
    \\[10pt]
    -i\gamma_f\gamma_f'
    &\begin{gathered}\xymatrix{
    {}\ar@{<->}[r] &{}}\end{gathered}
    \hspace{0.3cm}
    \begin{gathered}
    \xymatrix@=1cm{%
    {}\ar@{-}[r]|{\displaystyle Y}\ar@{}[dr]|{\mathlarger f} & {}\ar@{-}[d]|{\displaystyle Y} \\
    {}\ar@{-}[u]|{} & {}\ar@{-}[l]|{}}
    \end{gathered} \quad ,
\end{align}
with gauge constraints
\begin{equation}
    G_v =
    \begin{gathered}
   \xymatrix@=1cm{%
    &{}\ar@{-}[r]|{\displaystyle Z}\ar@{}[dr]|{\mathlarger f}& {}\ar@{-}[d]|{\displaystyle Z}\\
    {}\ar@{-}[r]|{\displaystyle X}&{v}\ar@{-}[u]|{\displaystyle XZ} & {}\ar@{-}[l]|{\displaystyle XZ}\\
    &{}\ar@{-}[u]|{\displaystyle X}&}
\end{gathered}
= 1.
\end{equation}
This is equivalent to the logical operators and stabilizers in Eq.~\eqref{eq: VC bosonization map} up to a shift. Therefore, it is gLU equivalent to the exact bosonization.

\begin{figure}[h]
    \centering
    \includegraphics[width=0.45\textwidth]{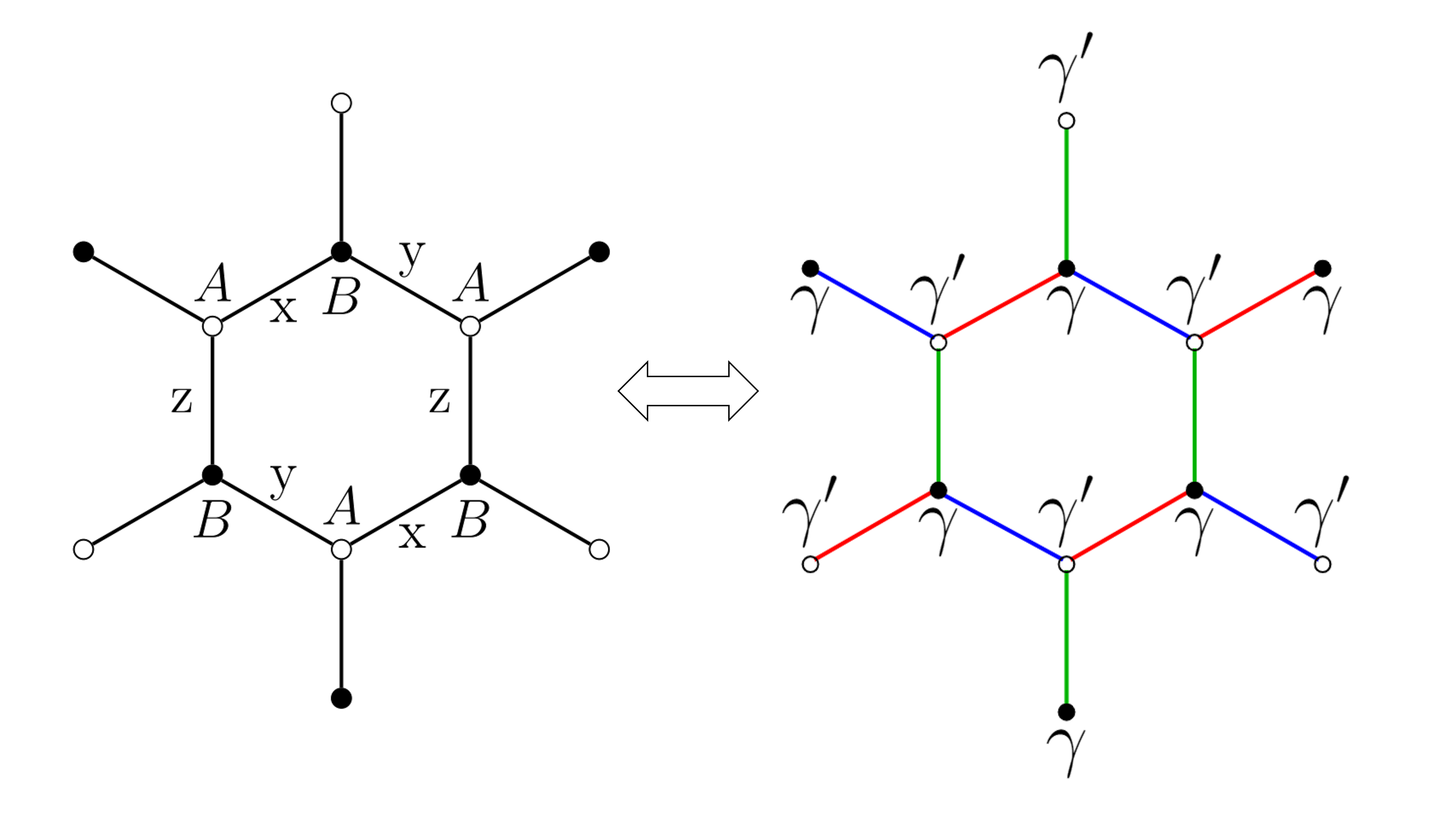}
    \caption{Kitaev's honeycomb mapping between Pauli matrices and Majorana fermions. For each link, the product of two Pauli matrices on its vertices is mapped to the product of $\gamma$ and $\gamma^\prime$ on its vertices by Eq.~\eqref{eq: Kitaev honeycomb logical mapping}.}
    \label{fig:honeycomb_bosonization}
\end{figure}

\begin{figure}[h]
    \centering
    \includegraphics[width=0.25\textwidth]{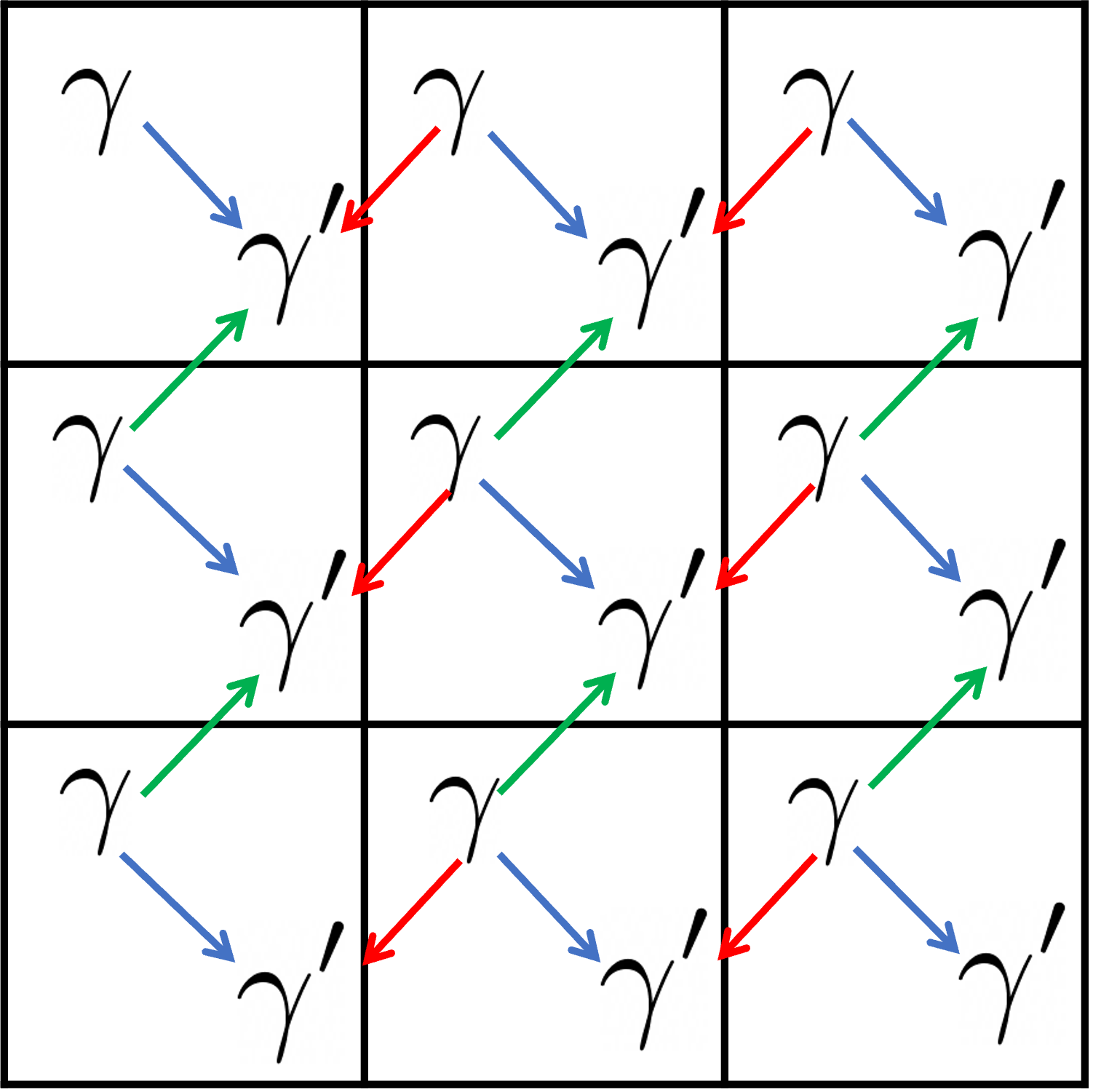}
    \caption{The embedding of the honeycomb lattice in Fig.~\ref{fig:honeycomb_bosonization} to the square lattice.}
    \label{fig:honeycomb}
\end{figure}

\subsection{Majorana loop stabilizer codes}\label{sec:Majorana_loop}
\begin{figure}[!]
    \centering
    \includegraphics[width=0.3\textwidth]{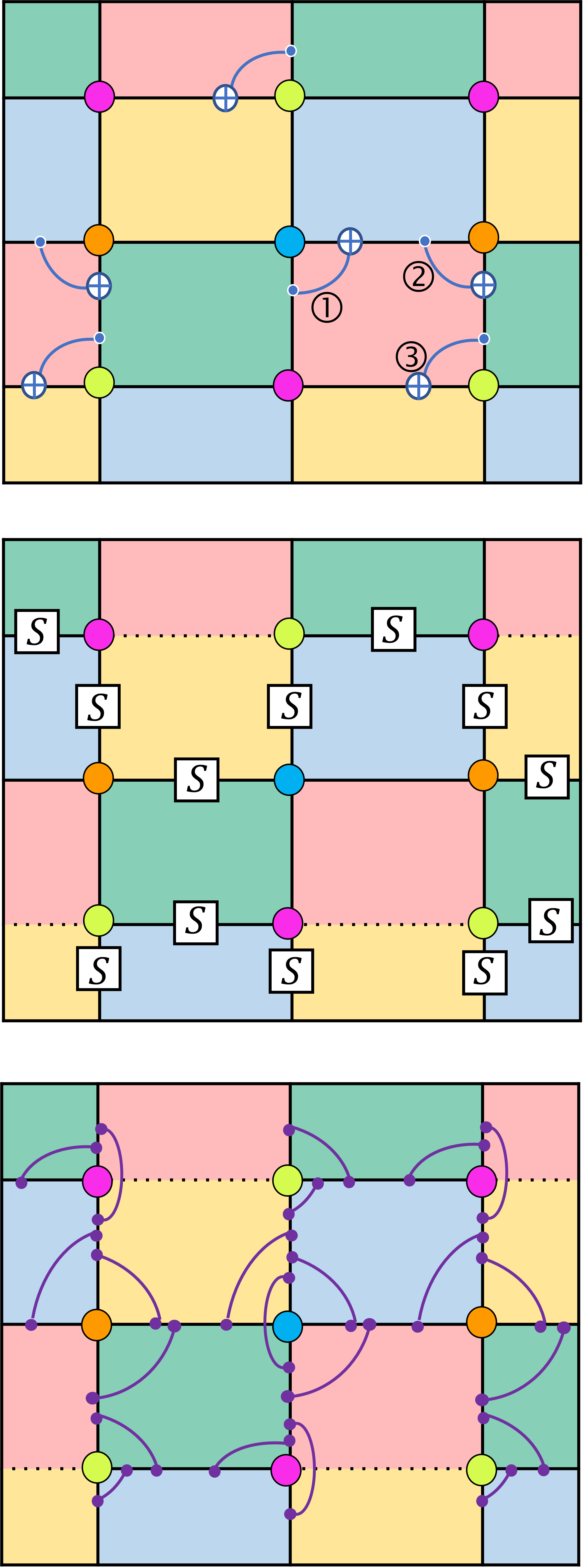}
    \caption{The finite-depth Clifford circuit for the MLSC to the exact bosonization. The first Clifford circuit will disentangle the qubits on the edges between red and yellow squares, so the edges between red and yellow squares become dashed lines in the second and third steps.}
    \label{fig:MLSC_Clifford}
\end{figure}

\begin{figure}[h]
    \centering
    \includegraphics[width=0.4\textwidth]{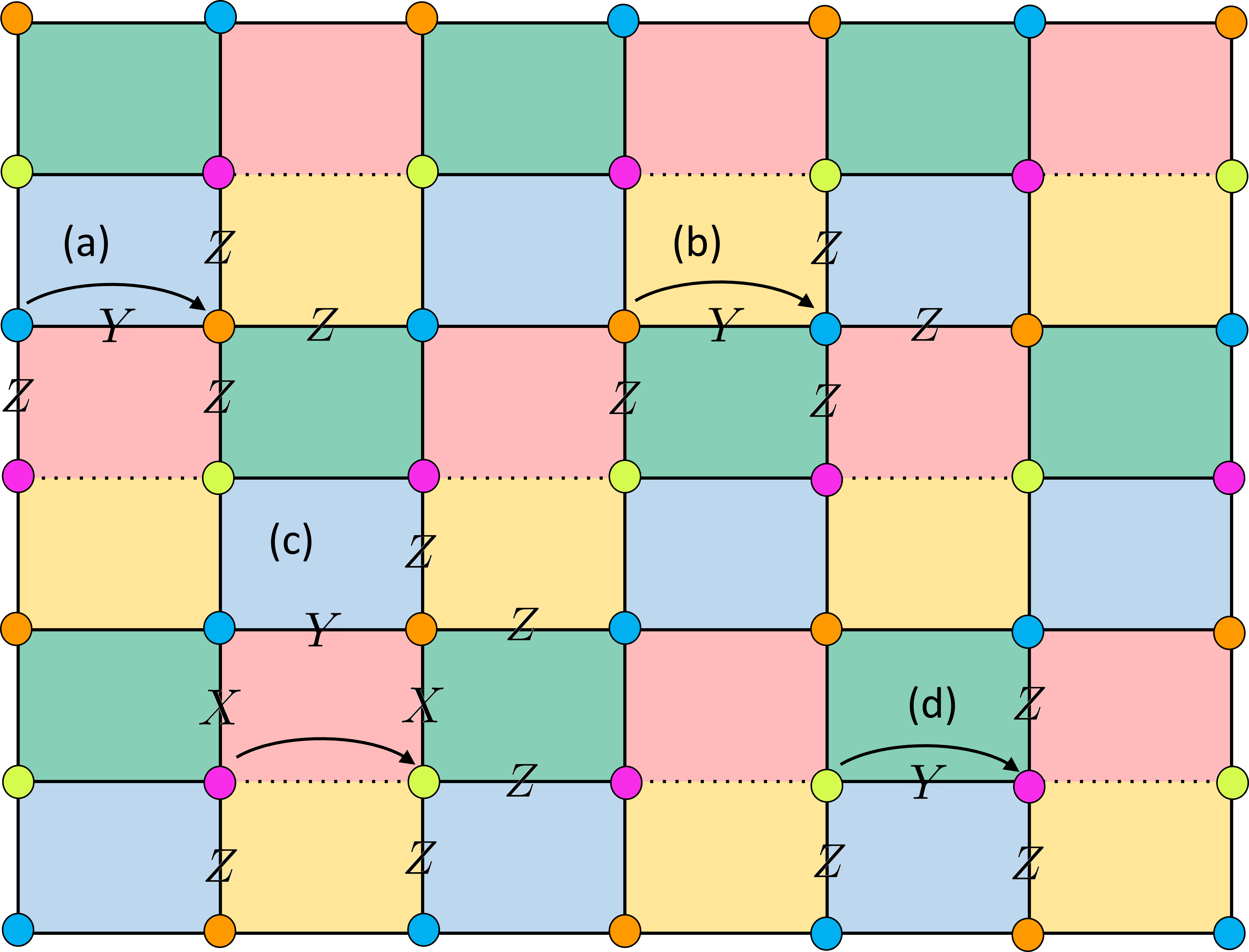}
    \caption{Horizontal hopping $i\gamma_{L(e)}\gamma_{R(e)}$ after a finite-depth gLU transformation in Fig.~\ref{fig:MLSC_Clifford}. (a),(b) are hoppings between blue and orange dots; (c),(d) are hoppings between pink and yellow dots. (a),(b), (d) are exactly the horizontal hopping in the exact bosonization, and (c) is a product of the hopping operator and the stabilizer in the exact bosonization. }
    \label{fig:MLSC_horizontal}
\end{figure}

\begin{figure}[h]
    \centering
    \includegraphics[width=0.4\textwidth]{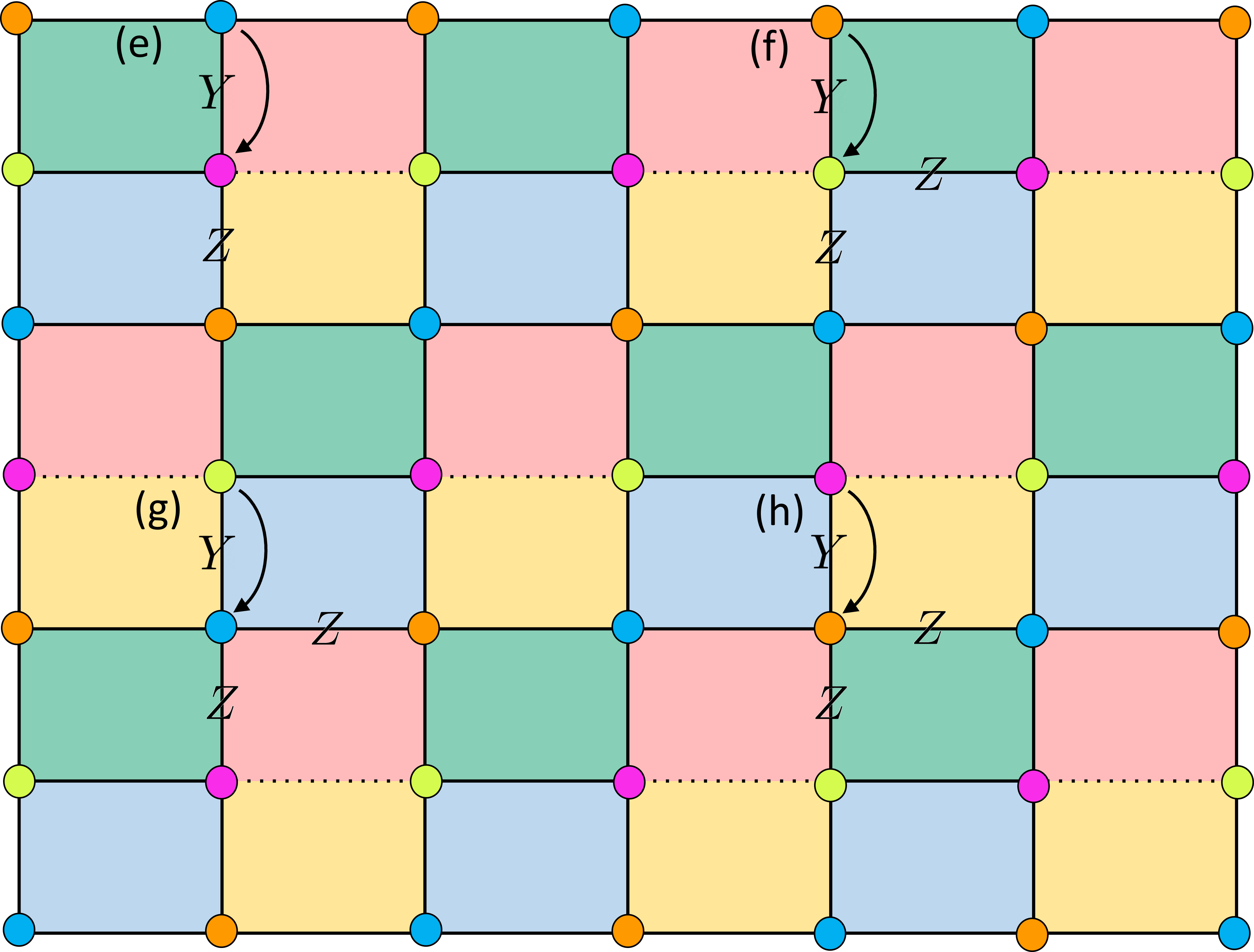}
    \caption{(e),(f),(g) and (h) are the vertical hopping $i\gamma_{L(e)}\gamma_{R(e)}$ after a finite-depth gLU transformation in Fig.~\ref{fig:MLSC_Clifford}. They match the vertical hoppings in the exact bosonization.}
    \label{fig:MLSC_vertical}
\end{figure}

In this section, we show that Majorana loop stabilizer code (MLSC) \cite{majorana_loop} is gLU equivalent to the 2d exact bosonization. Similar to BKSF, the Majorana loop stabilizer codes encode a complex fermion on vertex $v$ by qubits on edge $e$ connected to $v$. The Majorana loop stabilizer codes have fermionic hopping operation $A_e=i\gamma_{L(e)}\gamma_{R(e)}$ on each edge, fermion parity operator $P_f=-i\gamma_f\gamma_f'$ on each vertex and stabilizers $G_v$ acting on faces with different colors. We follow the same procedure described in Sec.~\ref{sec:gLU_and_bosonization}, conjugating the logical operations and stabilizers of MLSC by finite-depth Clifford circuits in Fig.~\ref{fig:MLSC_Clifford}. Then the four kinds of horizontal hoppings in MLSC reduce to the horizontal hopping in the exact bosonization (up to a stabilizer), and the same thing happens to the vertical hoppings, parity operators, and stabilizers. 

Starting from the MLSC, Fig.~\ref{fig:MLSC_horizontal} and Fig.~\ref{fig:MLSC_vertical} shows that the horizontal and vertical hoppings $i\gamma_{L(e)}\gamma_{R(e)}$ after the transformation can match the horizontal and vertical hoppings in exact bosonization. An interesting fact is that the first Clifford circuit in Fig.~\ref{fig:MLSC_Clifford} removes qubit on the edges between red and yellow squares and makes this correspondence possible.

\subsection{Connection to Jordan-Wigner transformation }\label{sec:Jordan_Wigner}

In this section, we will show that conjugating the exact bosonization by a linear-depth\footnote{The depth of the circuit scales linearly with the system size.} Clifford circuit in Fig.~\ref{fig:JW_clifford} will result in the 1d Jordan-Wigner transformation along the path in Fig.~\ref{fig:2d_Jordan_Wigner_ordering}.

For the Jordan-Wigner transformation, the qubit-fermion ratio is 1, but it is a non-local fermion-to-qubit mapping since in 1d Jordan-Wigner transformation, the vertical hopping terms are mediated by a Pauli $Z$-string between two sites. By directly applying the linear-depth gLU Clifford circuit in Fig.~\ref{fig:JW_clifford} to the logical operators of the exact bosonization, the qubits on the horizontal edges are disentangled and do not show up in the logical operators. All stabilizers become single-Pauli operators on horizontal edges and can be removed by gLU transformations. Explicitly, the logical operators after conjugation of the Clifford circuit are:

\begin{figure}[h]
\centering
\includegraphics[width=0.3\textwidth]{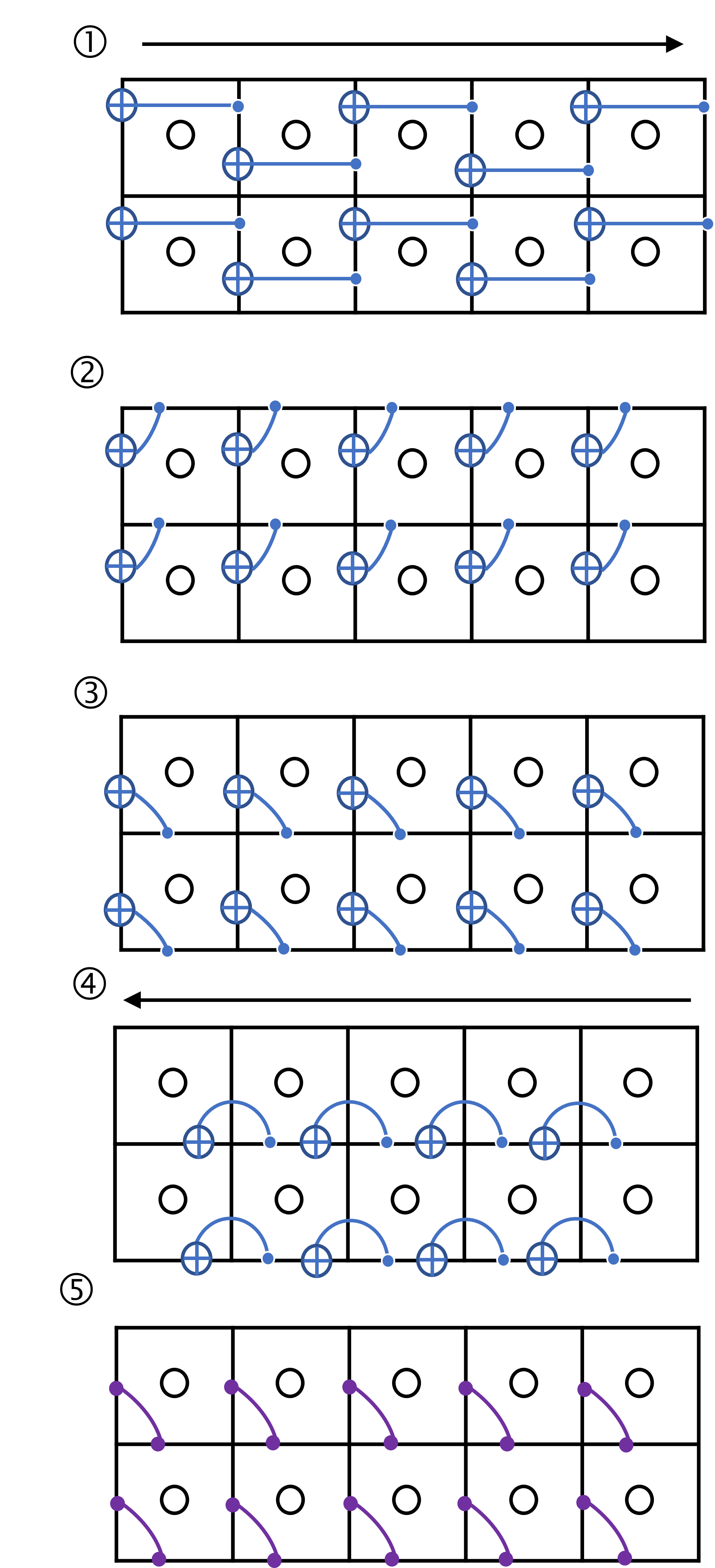}
\caption{The finite-depth Clifford circuit to convert the exact bosonization to 1d Jordan-Wigner transformation. In the first step, we ordered the system from the left to the right, then apply CNOT gate to each individual column following the above ordering. In the second step and third steps, the CNOT gates are applied simultaneously. In the fourth step, we order the system from the right to the left, then apply CNOT gate to each individual column following the right-to-left ordering. In the fifth step, CZ gates are simultaneously applied.}
\label{fig:JW_clifford}
\end{figure}

\begin{widetext}
\begin{align}\label{eq:2d_bosonization_to_JW}
    \begin{gathered}
   \xymatrix@=1cm{%
    & \\
    {}\ar@{-}[r]|{\displaystyle Z} &{}\ar@{-}[u]|{\displaystyle X_e}  }
    \end{gathered}
     &\begin{gathered}
    \xymatrix{
        {}\ar[r] &{}}
    \end{gathered} \quad
    \begin{gathered}
   \xymatrix@=1cm{%
    & &\\
    {}\ar@{-}[r]|{} \ar@{-}[u]|{\displaystyle X} \ar@{}[ru]|{\mathlarger{\mathlarger{\mathlarger{\mathlarger{\gamma}}}} }&{}\ar@{-}[u]|{\displaystyle X_e} \ar@{}[ru]|{\mathlarger{\mathlarger{\mathlarger{\mathlarger{\gamma'}}}} }&  }
    \end{gathered}
    ,
    \\
    \begin{gathered}
    \xymatrix@=1cm{%
    {}\ar@{-}[r]|{\displaystyle X_e}& \\ \ar@{-}[u]|{\displaystyle Z}& }\end{gathered}
     &\begin{gathered}
    \xymatrix{
        {}\ar[r] &{}}
    \end{gathered} \quad
    \begin{gathered}
    \vcenter{\hbox{\includegraphics[scale=.35,trim={.5cm 0cm 1.5cm 0cm},clip]{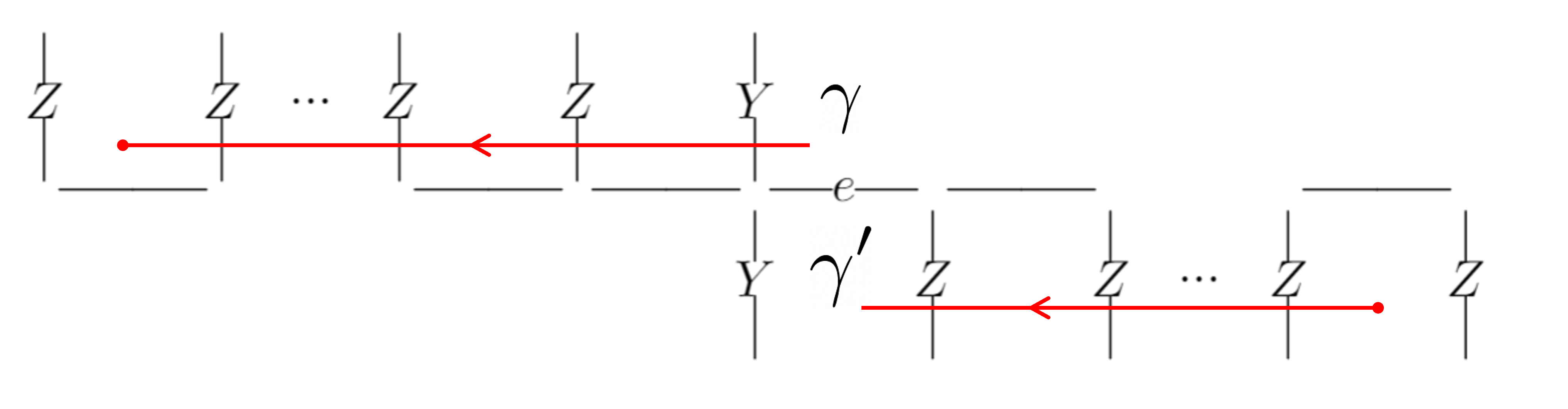}}}\end{gathered},\\
    \begin{gathered}
    \xymatrix@=1cm{%
    {}\ar@{-}[r]|{\displaystyle Z}& \\ \ar@{-}[u]|{\displaystyle Z} \ar@{-}[r]|{\displaystyle Z} \ar@{}[ur]|{\displaystyle f}& \ar@{-}[u]|{\displaystyle Z} }\end{gathered}
     &\begin{gathered}
    \xymatrix{
        {}\ar[r] &{}}
    \end{gathered} \quad
    \begin{gathered}
    \xymatrix@=1cm{%
    {}\ar@{-}[r]|{}& \\ \ar@{-}[u]|{\displaystyle Z} \ar@{-}[r]|{} \ar@{}[ur]|{\displaystyle f}& \ar@{-}[u]|{} }\end{gathered}.
\end{align}
\end{widetext}
which is exactly the 1d Jordan-Wigner transformation with the ordering chosen in Fig.~\ref{fig:2d_Jordan_Wigner_ordering}.
Hence, we can regard 1d Jordan-Wigner transformation as a special case that we remove all the qubits on the horizontal edges where the vertical hoppings are no longer local.

\begin{figure}[htb]
    \centering
    \includegraphics[width=0.25\textwidth]{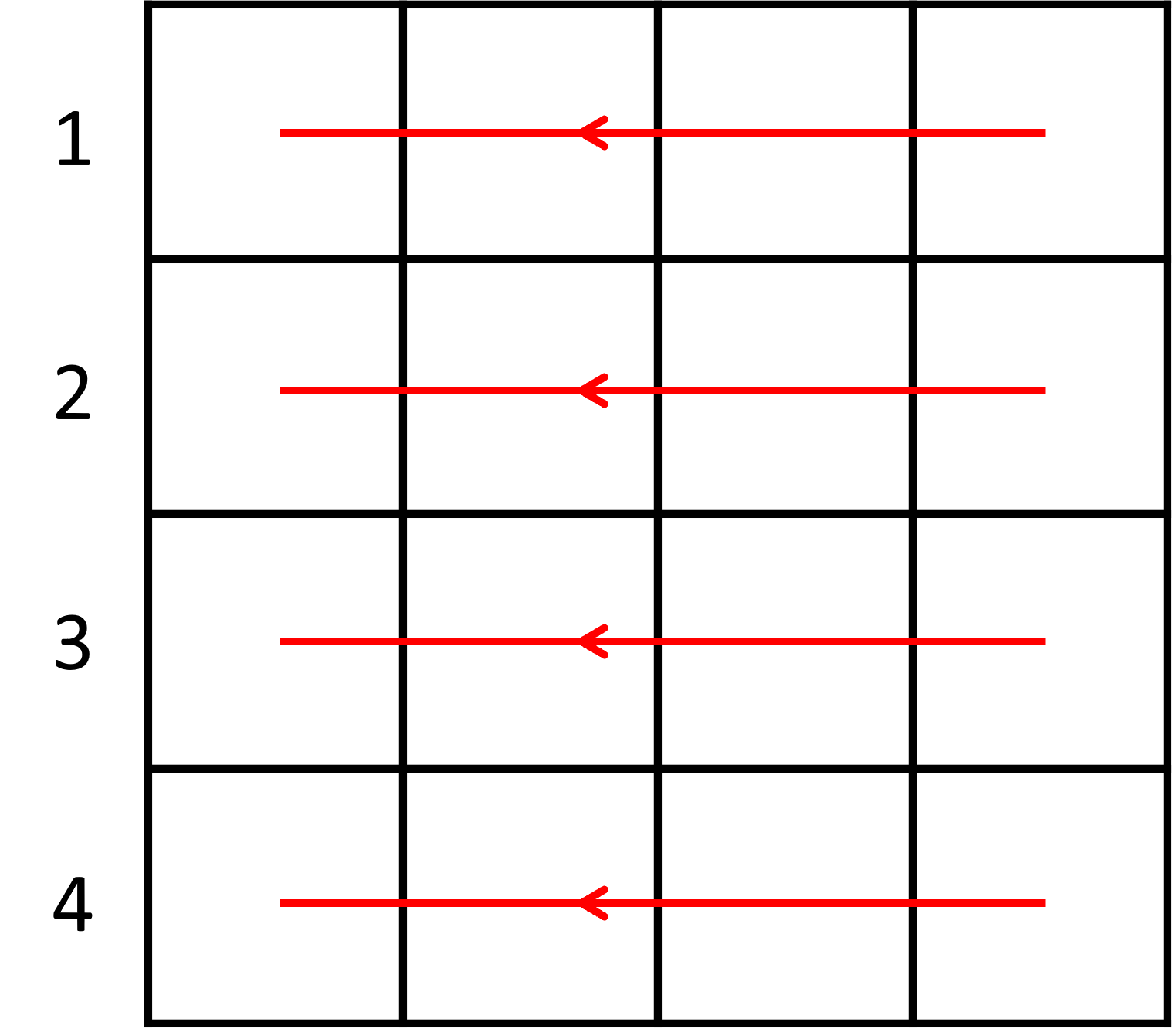}
    \caption{Space ordering of 1d Jordan-Wigner transformation}
    \label{fig:2d_Jordan_Wigner_ordering}
\end{figure}

\section*{Acknowledgement}
Y.-A.C thanks Mark Steudtner for pointing out the equivalence between the BKSF and the exact bosonization. Y.-A.C want to thank Anton Kapustin, Tyler Ellison and Nat Tantivasadakarn for useful discussions. Y.-A.C is also thankful to Bowen Yang for teaching the classification of Pauli stabilizer models in two dimensions.
Y.-A.C is supported by the JQI fellowship at the University of Maryland.
Y.X. is supported by ARO W911NF-15-1-0397 and advisor Mohammad Hafezi.

\appendix

\section{Clifford gates}\label{appendix:Clifford}
The Clifford group is defined as the group of unitaries that normalize the Pauli group. The Clifford gates are defined as elements in the Clifford group \cite{Gottesman_Clifford,gottesman1998heisenberg}. In this paper, we use single-qubit Clifford gates: $H$-gate, $S$-gate, $R$-gate.

The $H$-gate is the Hadamard gate
\begin{eqs}
    H=\frac{1}{\sqrt{2}}\begin{bmatrix}1&1\\1&-1\end{bmatrix}
\end{eqs}
that satisfies $HXH^\dagger=Z$, $HZH^\dagger=X$.

The $S$-gate is the phase gate

\begin{eqs}
    S=\begin{bmatrix}1&0\\0&i\end{bmatrix}
\end{eqs}
that satisfies $SXS^\dagger=Y$, $SYS^\dagger=-X$.

The $R$-gate is

\begin{eqs}
    R=\frac{1}{\sqrt{2}}\begin{bmatrix}1&i\\i&1\end{bmatrix}
\end{eqs}
where $RYR^\dagger=-Z$, $RZR^\dagger=Y$.

For 2-qubit Clifford gates, we choose $CNOT$, $CY$ and $CZ$ gate.
The $CNOT$ gate is
\begin{eqs}
    \text{CNOT}=\begin{bmatrix}1&0&0&0\\
    0&1&0&0\\
    0&0&0&1\\
    0&0&1&0\end{bmatrix},
\end{eqs}
where 
\begin{eqs}
&CNOT (X\otimes I)CNOT^\dagger=X\otimes X,\\ 
&CNOT(Z\otimes I)CNOT^\dagger=Z\otimes I,\\ 
&CNOT(I\otimes X)CNOT^\dagger=I\otimes X,\\ 
&CNOT(I\otimes Z)CNOT^\dagger=Z\otimes Z.
\end{eqs}

The $CY$ (controlled-$Y$) gate is
\begin{eqs}
    CY=\begin{bmatrix}1&0&0&0\\
    0&1&0&0\\
    0&0&0&-i\\
    0&0&i&0\end{bmatrix}
\end{eqs}
where
\begin{eqs}
    &CY(X\otimes I)CY^\dagger=X\otimes Y,\\
    &CY(Z\otimes I)CY^\dagger=Z\otimes I,\\
    &CY(I\otimes X)CY^\dagger=Z\otimes X,\\
    &CY(I\otimes Z)CY^\dagger=Z\otimes Z.
\end{eqs}

The $CZ$ (controlled-$Z$) gate is
\begin{eqs}
    CZ=\begin{bmatrix}1&0&0&0\\
    0&1&0&0\\
    0&0&1&0\\
    0&0&0&-1\end{bmatrix}
\end{eqs}
where
\begin{eqs}
    &CZ(X\otimes I)CZ^\dagger=X\otimes Z,\\
    &CZ(Z\otimes I)CZ^\dagger=Z\otimes I,\\
    &CZ(I\otimes X)CZ^\dagger=Z\otimes X,\\
    &CZ(I\otimes Z)CZ^\dagger=I\otimes Z.
\end{eqs}


\onecolumngrid
\clearpage
\twocolumngrid

\bibliography{bibliography.bib}

\end{document}